\journalname{} 
\newcommand{\nbAgents}{N}
\newtheorem{thr}{Theorem}
\newtheorem{defi}{Definition}
\newtheorem{lem}{Lemma}
\newtheorem{ex}{Example}
\newcommand{\GGI}{\mathtt{GGI}}
\newcommand{\rk}{\mathtt{rk}}
\newcommand{\rhoget}{\rho_{\mathtt{get}}}
\newcommand{\rhobreak}{\rho_{\mathtt{break}}}
\newcommand{\dmaxw}{d_{\max}^w}
\newcommand{\dmaxm}{d_{\max}^m}
\newcommand{\IN}{\mathtt{IN}_\mathbf{v}}
\newcommand{\OUT}{\mathtt{OUT}_\mathbf{v}}
\newcolumntype{C}[1]{>{\centering\let\newline\\\arraybackslash\hspace{0pt}}m{#1}}
\titlerunning{The GGI Stable Marriage Problem}
\title{Optimizing a Generalized Gini Index in Stable Marriage Problems: NP-Hardness, Approximation and a Polynomial Time Special Case}
\author{Hugo Gilbert \and Olivier Spanjaard}
\institute{H. Gilbert \and O. Spanjaard \at
              Sorbonne Université, CNRS, Laboratoire d'Informatique de Paris 6, LIP6, F-75005 Paris, France \\
              \email{\{hugo.gilbert,olivier.spanjaard\}@lip6.fr}           
}
\begin{document}
\maketitle

\begin{abstract} 
This paper deals with fairness in stable marriage problems. The idea studied here is to achieve fairness thanks to a Generalized Gini Index (GGI), a well-known criterion in inequality measurement, that includes both the egalitarian and utilitarian criteria as special cases. We show that determining a stable marriage optimizing a GGI criterion of agents' disutilities is an NP-hard problem. We then provide a polynomial time 2-approximation algorithm in the general case, as well as an exact algorithm which is polynomial time in the case of a constant number of non-zero weights parametrizing the GGI criterion.
\keywords{Stable marriage problem \and Fairness \and Generalized Gini index \and Complexity}
\end{abstract}

\section{Introduction}

Since the seminal work of Gale and Shapley~[\citeyear{gale1962college}] on stable marriages, matching problems under preferences have been extensively studied both by economists and computer scientists. These problems involve two sets of agents (also called individuals in the sequel) that should be matched with each other while taking agents' preferences into account. The results obtained in the field have a tremendous number of applications, among which the National Resident Matching Program in the US (for allocating junior doctors to hospitals), the teacher allocation in France (for allocating newly tenured teachers to schools) or the allocation of lawyers in Germany (for assigning graduating lawyers to legal internship positions). For an overview of the applications of matching models under preferences, the interested reader can refer to a recent book chapter on this topic \citep{biro2017applications}.

The \emph{stable marriage problem} involves $n$ men and $n$ women, each of whom ranks the members of the opposite sex in order of preference. The goal is to find a \emph{stable} matching, i.e., a matching between men and women such that there is no man and woman that prefer each other to their current match. Gale and Shapley~[\citeyear{gale1962college}] provided an algorithm that computes a stable marriage. However, it is well-known that this algorithm favours one group (men or women, according to the way the algorithm is applied) over the other.

We are interested here in \emph{fair} stable marriage algorithms, i.e., in procedures favouring stable marriages that fairly share dissatisfactions --also called disutilities-- among individuals (irrespective of their sex), the dissatisfaction being defined for each woman (resp. man) as a function of the rank, in order of preferences, of the man (resp. woman) to whom she is paired with. Given the vector of individuals' dissatisfactions induced by a matching, there are several ways of formalizing the notion of ``fairness''. We mean here by fair stable marriage that the vector of individuals' dissatisfactions should be well-balanced. For example, consider the following instance of the stable marriage problem.

\begin{ex}
The instance consists of 10 men $\{m_1,\ldots,m_{10}\}$ and women $\{w_1,\ldots,w_{10}\}$ with the following preferences, where $i \succ^m_k j$ (resp. $i \succ^w_k j$) means that $m_k$ (resp. $w_k$) prefers $w_i$ to $w_j$ (resp. $m_i$ to $m_j$):

\vspace{-0.5cm}

{\small
\begin{center}
\begin{align*}
& m_1 : 1 \succ^m_1 2 \succ^m_1 3 \succ^m_1 4 \succ^m_1 5 \succ^m_1 6 \succ^m_1 7 \succ^m_1 8 \succ^m_1 9 \succ^m_1 10 \\ 
& m_2 : 2 \succ^m_2 1 \succ^m_2 3 \succ^m_2 4 \succ^m_2 5 \succ^m_2 6 \succ^m_2 7 \succ^m_2 8 \succ^m_2 9 \succ^m_2 10 \\
& m_3 : 3 \succ^m_3 1 \succ^m_3 2 \succ^m_3 4 \succ^m_3 5 \succ^m_3 6 \succ^m_3 7 \succ^m_3 8 \succ^m_3 9 \succ^m_3 10 \\
& m_4 : 7 \succ^m_4 1 \succ^m_4 2 \succ^m_4 3 \succ^m_4 6 \succ^m_4 4 \succ^m_4 5 \succ^m_4 8 \succ^m_4 9 \succ^m_4 10 \\
& m_5 : 6 \succ^m_5 1 \succ^m_5 2 \succ^m_5 3 \succ^m_5 7 \succ^m_5 4 \succ^m_5 5 \succ^m_5 8 \succ^m_5 9 \succ^m_5 10 \\
& m_6 : 4 \succ^m_6 1 \succ^m_6 2 \succ^m_6 3 \succ^m_6 5 \succ^m_6 7 \succ^m_6 6 \succ^m_6 8 \succ^m_6 9 \succ^m_6 10 \\
& m_7 : 5 \succ^m_7 1 \succ^m_7 2 \succ^m_7 3 \succ^m_7 4 \succ^m_7 7 \succ^m_7 6 \succ^m_7 8 \succ^m_7 9 \succ^m_7 10 \\
& m_8 : 8 \succ^m_8 4 \succ^m_8 5 \succ^m_8 6 \succ^m_8 10 \succ^m_8 7 \succ^m_8 1 \succ^m_8 2\succ^m_8 3 \succ^m_8 9 \\
& m_9 : 10 \succ^m_9 4 \succ^m_9 6 \succ^m_9 7 \succ^m_9 9 \succ^m_9 5 \succ^m_9 1 \succ^m_9 2\succ^m_9 3 \succ^m_9 8\\
& m_{10} : 9 \succ^m_{10} 4 \succ^m_{10} 5  \succ^m_{10} 7 \succ^m_{10} 8 \succ^m_{10} 6 \succ^m_{10} 1 \succ^m_{10} 2\succ^m_{10} 3 \succ^m_{10} 10
\end{align*}
\end{center}

\vspace{-0.8cm}

\begin{center}
\begin{align*}
&w_1:1 \succ^w_1 2 \succ^w_1 3 \succ^w_1 4 \succ^w_1 5 \succ^w_1 6 \succ^w_1 7 \succ^w_1 8 \succ^w_1 9 \succ^w_1 10 \\
&w_2:1 \succ^w_2 2 \succ^w_2 3 \succ^w_2 4 \succ^w_2 5 \succ^w_2 6 \succ^w_2 7 \succ^w_2 8 \succ^w_2 9 \succ^w_2 10 \\
&w_3:1 \succ^w_3 2 \succ^w_3 3 \succ^w_3 4 \succ^w_3 5 \succ^w_3 6 \succ^w_3 7 \succ^w_3 8 \succ^w_3 9 \succ^w_3 10 \\
&w_4:1 \succ^w_4 2 \succ^w_4 3 \succ^w_4 7 \succ^w_4 8 \succ^w_4 9 \succ^w_4 6 \succ^w_4 4 \succ^w_4 5 \succ^w_4 10 \\ 
&w_5:1 \succ^w_5 2 \succ^w_5 3 \succ^w_5 6 \succ^w_5 8 \succ^w_5 9 \succ^w_5 7 \succ^w_5 4 \succ^w_5 5 \succ^w_5 10 \\ 
&w_6:1 \succ^w_6 2 \succ^w_6 3 \succ^w_6 4 \succ^w_6 8 \succ^w_6 9 \succ^w_6 5 \succ^w_6 6 \succ^w_6 7 \succ^w_6 10 \\
&w_7:1\succ^w_7 2 \succ^w_7 3 \succ^w_7 5 \succ^w_7 8 \succ^w_7  9 \succ^w_7 4 \succ^w_7 6 \succ^w_7 7 \succ^w_7 10 \\ 
&w_8: 2 \succ^w_8 10 \succ^w_8 8 \succ^w_8 7 \succ^w_8 1 \succ^w_8 3 \succ^w_8 4 \succ^w_8 5 \succ^w_8 6 \succ^w_8 9 \\
&w_9: 1 \succ^w_9 2 \succ^w_9 9 \succ^w_9 10 \succ^w_9 3 \succ^w_9 4 \succ^w_9 5 \succ^w_9 6 \succ^w_9 7 \succ^w_9 8\\ 
&w_{10}:1 \succ^w_{10} 2 \succ^w_{10} 3 \succ^w_{10} 4 \succ^w_{10} 5 \succ^w_{10} 6 \succ^w_{10} 7 \succ^w_{10} 10 \succ^w_{10} 8 \succ^w_{10} 9 
\end{align*}
\end{center}
}

The stable marriages in this instance are:
\vspace{-0.1cm}
{\small
\begin{align*}
\mathbf{x}^1:~&  \{(m_1,w_1),(m_2,w_2),(m_3,w_3),(m_4,w_7),(m_5,w_6),(m_6,w_4),(m_7,w_5),\\&(m_8,w_8),(m_9,w_{10}),(m_{10},w_9)\}\\
\mathbf{x}^2:~&  \{(m_1,w_1),(m_2,w_2),(m_3,w_3),(m_4,w_7),(m_5,w_6),(m_6,w_5),(m_7,w_4),\\&(m_8,w_8),(m_9,w_{10}),(m_{10},w_9)\}\\
\mathbf{x}^3:~& \{(m_1,w_1),(m_2,w_2),(m_3,w_3),(m_4,w_6),(m_5,w_7),(m_6,w_4),(m_7,w_5),\\&(m_8,w_8),(m_9,w_{10}),(m_{10},w_9)\}\\
\mathbf{x}^4:~&  \{(m_1,w_1),(m_2,w_2),(m_3,w_3),(m_4,w_6),(m_5,w_7),(m_6,w_5),(m_7,w_4),\\&(m_8,w_8),(m_9,w_{10}),(m_{10},w_9)\}\\
\mathbf{x}^5:~&  \{(m_1,w_1),(m_2,w_2),(m_3,w_3),(m_4,w_6),(m_5,w_7),(m_6,w_5),(m_7,w_4),\\&(m_8,w_{10}),(m_9,w_9),(m_{10},w_8)\}\\
\end{align*}
}

\vspace{-0.9cm}

where a pair $(m_i,w_j)$ means that $m_i$ and $w_j$ are matched. 

\noindent If one assumes that the dissatisfaction of an individual is equal to the rank of the partner in his/her preference list, then the dissatisfactions induced by the previous stable marriages are:

{\small
\begin{center}
\begin{tabular}{ccC{1.75cm}C{1.75cm}}
\hline 
matching & vector of dissatisfactions & sum of$~~~~~~~~~~$ dissatisfactions & max of$~~~~~~~~~~$ dissatisfactions \\
\hline  
$\mathbf{x}^1$ & $(1,1,1,1,1,1,1,1,1,1,1,2,3,7,7,7,7,3,4,10)$ & 61 & 10\\
$\mathbf{x}^2$ & $(1,1,1,1,1,5,5,1,1,1,1,2,3,4,4,7,7,3,4,10)$ & 63 & 10\\
$\mathbf{x}^3$ & $(1,1,1,5,5,1,1,1,1,1,1,2,3,7,7,4,4,3,4,10)$ & 63 & 10\\
$\mathbf{x}^4$ & $(1,1,1,5,5,5,5,1,1,1,1,2,3,4,4,4,4,3,4,10)$ & 65 & 10\\
$\mathbf{x}^5$ & $(1,1,1,5,5,5,5,5,5,5,1,2,3,4,4,4,4,2,3,9)$ & 74 & 9\\
\hline
\end{tabular}
\end{center}
}

where the $i^{th}$ component of the vector is the dissatisfaction of $m_i$ for $i \in \{1,\ldots,10\}$, and of $w_{i-10}$ for $i \in \{11,\ldots,20\}$.

\label{exGILBERT}
\end{ex}

In this instance, the matching $\mathbf{x}^4$ can be considered as inducing a \emph{well-balanced} vector of dissatisfactions. The matchings $\mathbf{x}^1$, $\mathbf{x}^2$ and $\mathbf{x}^3$ indeed favour more some individuals (the men 
 in this case) than others, while matching $\mathbf{x}^5$ yields quite high dissatisfactions for numerous agents. The matching $\mathbf{x}^4$ is therefore a good compromise between the \emph{utilitarian} and the \emph{egalitarian} viewpoints, where the utilitarian viewpoint aims at minimizing the sum of dissatisfactions while the egalitarian viewpoint aims at minimizing the dissatisfaction of the worst off individual. Both the utilitarian and egalitarian approaches have been advocated for promoting fairness in the stable marriage problem \citep{gusfield1987three,gusfield1989stable}. Other approaches aim at treating equally men and women, by minimizing the absolute difference between the total dissatisfactions of the two groups (\emph{sex-equal} stable marriage problem \citep{kato1993complexity,mcdermid2014sex}) or by minimizing the maximum total dissatisfaction between the two groups (\emph{balanced} stable marriage problem \citep{manlove2013algorithmics}). However, note that, in the instance of Example~\ref{exGILBERT}, all these criteria favour either $\mathbf{x}^1$ (utilitarian) or $\mathbf{x}^5$ (egalitarian, sex-equal, balanced). Finally, there exists another type of approach, that is not based on assigning scores to marriages. In a first step, for each man, one lists all his possible matches in a stable marriage, in order of his preferences (this list includes as many elements as there are feasible stable marriages). In a second step, each man is matched with the median woman in the list. This procedure yields a stable marriage, which is called \emph{median} stable marriage \citep{teo1998geometry,cheng2010understanding}. In the instance of the example, the median stable marriage is $\mathbf{x}^4$. Nevertheless, in this article, we focus on determining a fair stable marriage by using a \emph{scoring rule}.
 
In social choice theory, a scoring rule assigns a score to each alternative by summing the scores given by every individual  over the alternative. This summation principle ensures that all individuals contribute equally to the score of an alternative. An alternative is usually a candidate in an election, but it can also be an element of a combinatorial domain. For instance, in proportional representation problems \citep{procaccia2008complexity}, where one aims at electing a committee, every feasible committee is an alternative. In the setting of stable marriage problems, every stable marriage is an alternative and the utilitarian approach is clearly a scoring rule where each individual evaluates a stable marriage by the rank of his/her match. An interesting extension of the class of scoring rules is the class of \emph{rank dependent scoring rules} \citep{goldsmith2014voting}, where, instead of limiting the aggregation to a summation operation, the scores are aggregated by taking into account their ranks in the ordered list of scores. As emphasized by \cite{goldsmith2014voting}, rank dependent scoring rules can be used to favour fairness by imposing some conditions on their parameters. A well known class of rank-dependent scoring rules in inequality measurement are the \emph{Generalized Gini Indices} (GGI) \citep{weymark1981generalized}. Furthermore, this class of rank dependent scoring rules circumvents both the utilitarian and egalitarian criteria. Their optimization on combinatorial domains have been studied in several settings (often under the name of \emph{Ordered Weighted Averages}): assignment problems \citep{lesca2018fair}, proportional representation \citep{elkind2015owa}, resource allocation \citep{heinen2015fairness}. To the best of our knowledge, the problem of determining a GGI optimal stable marriage has not been studied yet. This is precisely the purpose of the present work.

The paper is organized as follows. In Section~\ref{sec:notations}, we introduce notations and we formally define the GGI stable marriage problem studied here. Then, in Section~\ref{sec:complexity}, we prove that it is NP-hard to determine an optimal stable marriage according to a GGI criterion applied to agents' disutilities. In Section~\ref{sec:approximation}, we provide a polynomial time 2-approximation algorithm. Finally, in Section~\ref{sec:param}, we establish a parametrized complexity result with respect to a GGI-specific parameter.

\section{The GGI Stable Marriage Problem} \label{sec:notations}

Let $\mathcal{M}=\{m_1,\ldots,m_n\}$ denote the set of men, and $\mathcal{W}=\{w_1,\ldots,w_n\}$ the set of women. As in Example \ref{exGILBERT}, for each $m_k$ (resp. $w_k$), a preference relation $\succ^m_k$ (resp. $\succ^w_k$) is defined on $\mathcal{W}$ (resp. $\mathcal{M}$), where $i \succ^m_k j$ (resp. $i \succ^w_k j$) means that $m_k$ (resp. $w_k$) prefers $w_i$ to $w_j$ (resp. $m_i$ to $m_j$). We denote by $\rk(m_i,w_j)$ the rank of woman $w_j$ in the preference order of man $m_i$, and similarly for $\rk(w_j,m_i)$.

A solution of a stable marriage problem is a matching represented by a binary matrix $\mathbf{x}$, where $x_{ij}=1$ means that $m_i$ is matched with $w_j$. A matching $\mathbf{x}$ induces a matching function $\mu_\mathbf{x}$ defined by $w_j = \mu_\mathbf{x}(m_i)$ and $m_i = \mu_\mathbf{x}(w_j)$ if $x_{ij} = 1$. In a perfect matching (called indifferently matching or marriage from now on), every man (resp. woman) is matched with a different woman (resp. man). More formally, a matching is defined by: 
\begin{eqnarray}
\textstyle\sum_{i=1}^n x_{ij} = 1 \quad & \forall j \in \{1,\ldots,n \} \label{c_matchingi}\\
\textstyle\sum_{j=1}^n x_{ij} = 1 \quad & \forall i \in \{1,\ldots,n \} \label{c_matchingj}
\end{eqnarray}

A matching is said to be \emph{stable} if there exists no man and woman who prefer each other to their current partner. More formally, a perfect matching is stable if the following constraints hold \citep{vate1989linear}:
\begin{equation}
x_{ij} + \sum_{j' \succ^m_i j} x_{ij'} + \sum_{i' \succ^w_j i} x_{i'j} \ge 1 \quad \forall (i,j) \in \{1,\ldots,n\}^2 \label{c_stable}
\end{equation}
The set of stable marriages, i.e. binary matrices $\mathbf{x}$ such that constraints \ref{c_matchingi}, \ref{c_matchingj} and \ref{c_stable} hold, is denoted by $\mathcal{X}$. In their seminal paper, \cite{gale1962college} states that there always exists at least one stable marriage, which can be computed in $O(n^2)$. 
\\
The Gale-Shapley algorithm is based on a sequence of proposals from men to women. Each man proposes to the women following his preference order, pausing when a women agrees to be matched with him but continuing if his proposal is rejected. When a woman receives a proposal, she rejects it if she already has a better proposal according to her preferences. Otherwise, she agrees to hold it for consideration and rejects any former proposal that she might had. Such a sequence of proposals always leads to a stable marriage called man-optimal  stable marriage and denoted by $\mathbf{x}^m$ (if the role of men and women is reversed, we obtain the woman-optimal stable marriage denoted by $\mathbf{x}^w$). In the man-optimal stable marriage, each man has the best partner, and each woman has the worst partner, that is possible in any stable marriage. Contrarily, in the woman-optimal stable marriage, each woman has the best partner, and each man has the worst partner, that is possible in any stable marriage.

Two important properties of the Gale-Shapley algorithm are that:\\[0.5ex]
-- if $m$ proposes to $w$, then there is no stable marriage in which $m$ has a better match than $w$.\\ 
-- if $m$ proposes to $w$, then there is no stable marriage in which $w$ has a worse match than $m$.\\[0.5ex]
These properties justify the notion of preference shortlists obtained through the Gale-Shapley algorithm by removing any man $m$ from a woman $w$'s preference list and vice-versa, when $w$ receives a proposal from a man she prefers to $m$. Note that the shortlists that are obtained at the end of the algorithm do not depend on the order in which the proposals are made.
\begin{ex}
For instance, with the preferences of Example~\ref{exGILBERT}, the Gale-Shapley algorithm leads to the following shortlists:
{\small
\begin{align*}
m_1 &: 1 \succ^m_1 2 \succ^m_1 3 \succ^m_1 4 \succ^m_1 5 \succ^m_1 6 \succ^m_1 7 \succ^m_1 9 \succ^m_1 10 \\ 
m_2 &: 2 \succ^m_2 3 \succ^m_2 4 \succ^m_2 5 \succ^m_2 6 \succ^m_2 7 \succ^m_2 8 \succ^m_2 9 \succ^m_2 10 \\
m_3 &: 3 \succ^m_3 4 \succ^m_3 5 \succ^m_3 6 \succ^m_3 7 \succ^m_3 10 \\
m_4 &: 7 \succ^m_4 6 \succ^m_4 10 \\
m_5 &: 6 \succ^m_5 7 \succ^m_5 10 \\
m_6 &: 4 \succ^m_6 5 \succ^m_6 10 \\
m_7 &: 5 \succ^m_7 4 \succ^m_7 10 \\
m_8 &: 8 \succ^m_8 4 \succ^m_8 5 \succ^m_8 6 \succ^m_8 10 \succ^m_8 7\\
m_9 &: 10 \succ^m_9 4 \succ^m_9 6 \succ^m_9 7 \succ^m_9 9  \succ^m_9 5\\
m_{10} &: 9 \succ^m_{10} 8 \succ^m_{10} 10\\
\\
w_1 &: 1  \\
w_2 &: 1 \succ^w_2 2 \\
w_3 &: 1 \succ^w_3 2 \succ^w_3 3 \\
w_4 &: 1 \succ^w_4 2 \succ^w_4 3 \succ^w_4 7 \succ^w_4 8 \succ^w_4 9 \succ^w_4 6 \\ 
w_5 &: 1 \succ^w_5 2 \succ^w_5 3 \succ^w_5 6 \succ^w_5 8 \succ^w_5 9 \succ^w_5 7 \\  
w_6 &: 1 \succ^w_6 2 \succ^w_6 3 \succ^w_6 4 \succ^w_6 8 \succ^w_6 9 \succ^w_6 5 \\ 
w_7 &: 1 \succ^w_7 2 \succ^w_7 3 \succ^w_7 5 \succ^w_7 8 \succ^w_7 9 \succ^w_7 4 \\ 
w_8 &: 2 \succ^w_8 10 \succ^w_8 8 \\
w_9 &: 1 \succ^w_9 2 \succ^w_9 9 \succ^w_9 10\\
w_{10} &: 1 \succ^w_{10} 2 \succ^w_{10} 3 \succ^w_{10} 4 \succ^w_{10} 5 \succ^w_{10} 6 \succ^w_{10} 7 \succ^w_{10} 10 \succ^w_{10} 8 \succ^w_{10} 9 
\end{align*}
}
\end{ex}

These shortlists makes it possible to identify some transformations that can be applied from the man-optimal stable marriage to obtain other stable marriages (more favourable to women). These transformations are called rotations \citep{irving1986complexity}. A rotation is a sequence $\rho = (m_{i_0},w_{i_0}), \ldots,(m_{i_{r-1}},w_{i_{r-1}})$ of man-woman pairs such that, for each $i_k$ ($0\leq k \leq r-1$), (1) $w_{i_k}$ is first in $m_{i_k}$'s shortlist and (2) $w_{i_{k+1}}$ ($k+1$ taken modulo $r$) is second in $m_{i_k}$'s shortlist. Such a rotation is said to be exposed in the shortlists.

\begin{ex}
Continuing Example~\ref{exGILBERT}, there are two rotations exposed in the shortlists, $\rho_1 = (4,7),(5,6)$ and $\rho_2 = (6,4),(7,5)$. 
\end{ex}

Given a rotation, if each $m_{i_k}$ exchanges his current partner $w_{i_k}$ for $w_{i_{k+1}}$, then the matching remains stable. Eliminating a rotation $\rho = (m_{i_0},w_{i_0}), \ldots,(m_{i_{r-1}},w_{i_{r-1}})$ amounts to removing all successors $m$ of $m_{i_{k-1}}$ in $w_{i_k}$'s shortlist together with the corresponding appearances of $w_{i_k}$ in the shortlists of men $m$. The obtained stable marriage can then be read from the modified shortlists by matching each man with the first woman in his shortlist. In this new stable marriage, each woman (resp. man) is better off (resp. worse off) than before eliminating the rotation.

Once an exposed rotation has been identified and eliminated, then one or more rotations may be exposed in the resulting (further reduced) shortlists. This process may be repeated, and once all rotations have been eliminated, we obtain the woman optimal stable marriage. A rotation $\pi$ is said to be a predecessor of a rotation $\rho$, denoted by $\pi < \rho$, if $\rho$ cannot be exposed in the men shortlists before $\pi$ is eliminated. This notion of predecessors makes it possible to define what is called the rotation poset $(P, \leq)$ where $P$ is the set of all rotations and $\leq$ is the precedence relation that we have just mentioned. A closed set in a poset $(P, \leq)$ is a subset $R$ of $P$ such that $\rho \in R, \pi < \rho \Rightarrow \pi \in R$.

The following theorem is crucial to understand the importance of the rotation poset.

\begin{thr} \citep{irving1986complexity}
The stable marriages of a given stable marriage instance are in one-to-one correspondence with the closed subsets of the rotation poset.
\end{thr}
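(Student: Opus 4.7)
The plan is to establish the bijection by constructing a map in each direction and showing they are mutual inverses. Let $(P, \leq)$ denote the rotation poset, $\mathbf{x}^m$ the man-optimal stable marriage, and for a stable marriage $\mathbf{x}$ let $R(\mathbf{x})$ denote the set of rotations that must be eliminated when going from $\mathbf{x}^m$ to $\mathbf{x}$ along some sequence of rotation eliminations. In the other direction, given a closed set $R \subseteq P$, define $\mathbf{x}(R)$ as the stable marriage obtained by starting from $\mathbf{x}^m$ and eliminating the rotations of $R$ in any order consistent with $\leq$.

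The first step is to show that $\mathbf{x}(R)$ is well defined. Since $R$ is closed, at each intermediate stage the set of not-yet-eliminated rotations of $R$ has a minimal element whose predecessors have all been eliminated; by definition of the predecessor relation, such a rotation is exposed in the current shortlists and can therefore be eliminated. A standard confluence argument (essentially a diamond lemma on pairs of simultaneously exposed rotations: two rotations exposed at the same time are disjoint and their eliminations commute, giving the same shortlists) shows that $\mathbf{x}(R)$ does not depend on the linear extension chosen. Thus the map $R \mapsto \mathbf{x}(R)$ is well-defined and produces a stable marriage.

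Second, to invert this map I would define $R(\mathbf{x})$ by tracking which rotations are eliminated on any path from $\mathbf{x}^m$ to $\mathbf{x}$. One must first prove that every stable marriage is reachable from $\mathbf{x}^m$ by eliminating some sequence of rotations. This is done by induction on the number of men whose current partner differs from their $\mathbf{x}^m$-partner: if $\mathbf{x} \neq \mathbf{x}^m$, then among the men whose $\mathbf{x}$-partner is not at the top of their current shortlist, the set of pairs $(m_{i_k}, w_{i_k})$ obtained by iteratively following "current $\mathbf{x}^m$-partner $\to$ the man who currently holds the woman ranked just below" produces a cycle, which is an exposed rotation whose elimination brings us strictly closer to $\mathbf{x}$. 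By confluence, the resulting set $R(\mathbf{x})$ does not depend on the choices made.

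Finally, I would verify that $R(\mathbf{x})$ is closed: if $\rho \in R(\mathbf{x})$ and $\pi < \rho$, then $\rho$ cannot be eliminated before $\pi$ on any valid sequence, so $\pi$ must also appear in $R(\mathbf{x})$. Putting the pieces together, $\mathbf{x}(R(\mathbf{x})) = \mathbf{x}$ is immediate from the construction, and $R(\mathbf{x}(R)) = R$ follows from confluence plus the fact that no rotation outside $R$ is ever eliminated when building $\mathbf{x}(R)$. The main obstacle I anticipate is the confluence step and the careful analysis showing that a rotation is exposed exactly when all its poset-predecessors have been eliminated; the rest is bookkeeping once these two technical facts are in hand.
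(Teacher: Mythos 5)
First, a point of reference: the paper does not prove this theorem itself --- it is quoted with a citation to Irving and Leather (1986) --- so your attempt can only be measured against the known proof from the literature (see also Gusfield and Irving's book), whose architecture your outline does broadly follow: two maps, confluence, reachability from $\mathbf{x}^m$, and closedness.

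There is, however, a genuine gap at the very first step, and it is the heart of the theorem rather than a technicality. You write that if $R$ is closed and all predecessors of a minimal not-yet-eliminated rotation $\rho \in R$ have been eliminated, then ``by definition of the predecessor relation, such a rotation is exposed in the current shortlists.'' The definition used in the paper gives only the \emph{necessity} direction: $\pi < \rho$ means $\rho$ \emph{cannot} be exposed before $\pi$ is eliminated. It does not assert the converse, namely that once every $\pi < \rho$ has been eliminated (and $\rho$ has not), $\rho$ \emph{is} exposed in the resulting shortlists. A priori, eliminating rotations that are not predecessors of $\rho$ could conceivably prevent $\rho$ from ever becoming exposed, and ruling this out is a substantive structural lemma (in Gusfield--Irving it is proved that the rotations exposed after eliminating a closed set $S$ are exactly the minimal elements of $P \setminus S$). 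Without it, your map $R \mapsto \mathbf{x}(R)$ is not known to be defined on all closed sets, and both the surjectivity and the identity $R(\mathbf{x}(R)) = R$ collapse. A second, smaller gap sits in your reachability induction: you assert that the cycle built by following ``second woman in the shortlist $\to$ her current partner'' stays among men whose partner differs from their $\mathbf{x}$-partner, so that eliminating it ``brings us strictly closer to $\mathbf{x}$.'' That the chain cannot pass through a man already matched as in $\mathbf{x}$ (which would overshoot and break domination of $\mathbf{x}$) is exactly where the stability of $\mathbf{x}$ must be invoked, and you give no argument for it. Your confluence step (disjointness and commutation of simultaneously exposed rotations, plus termination) is sound, and the closedness of $R(\mathbf{x})$ does follow directly from the paper's definition of $<$; but the two omitted lemmas are the actual mathematical content of Irving and Leather's result, so the proposal as it stands is a correct skeleton with the load-bearing steps missing.
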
 

In this correspondence, each closed subset $R$ represents the stable marriage obtained by eliminating the rotations in $R$ starting from $\mathbf{x}^m$.

The rotation poset can be represented as a  directed acyclic graph, with the rotations as nodes and an arc from $\pi$ to $\rho$ iff $\pi$ is an immediate predecessor of $\rho$ (i.e., $\pi < \rho$ and there is no rotation $\sigma$ such that $\pi < \sigma < \rho$). Note that this graph has at most $n(n-1)/2$ nodes, i.e., there are at most $n(n-1)/2$ rotations \citep{irving1987efficient}. Indeed, there are at most $n^2 -n$ pairs that can be involved in rotations (the $n$ pairs of $\mathbf{x}^w$ cannot be involved in a rotation). Each pair belong to at most one rotation and there are at least two pairs in each rotation. We will take advantage of the rotation poset in multiple places in the paper. Importantly, note that the rotation poset (actually a subgraph whose transitive closure is the rotation poset) can be generated in $O(n^2)$ \citep{gusfield1989stable}.


\begin{ex}
For instance, with the preferences of Example~\ref{exGILBERT}, the rotations and their immediate predecessors are given in the following table.
\begin{table}[!h]
\begin{center}
\begin{tabular}{|l|l|l|}
  \hline
  Rotation & New pairs & Immediate predecessors\\
  \hline
  $\rho_1 = (4,7),(5,6)$ & $(4,6), (5,7)$ &  \\
  $\rho_2 = (6,4),(7,5)$ & $(6,5), (7,4)$ &  \\
  $\rho_3 = (8,8),(9,10),(10,9)$ & $(8,10), (9,9), (10,8)$ &  $\rho_1,\rho_2$\\
  \hline
\end{tabular}
\end{center}
\end{table}

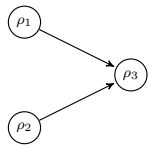
\begin{figure}[!h] 
   \centering
    \scalebox{.7}{\begin{tikzpicture}[->,>=stealth',shorten >=1pt,auto,node distance=3cm,
                    semithick]

  \node[circle,draw,text=black] (A)   at (0, 1)                 {$\rho_1$};
  \node[circle,draw,text=black] (B)   at (0,-1)                 {$\rho_2$};
  \node[circle,draw,text=black] (C)   at (2,0)                 {$\rho_3$};
  
  \path (A) edge  node {} (C)
	(B) edge  node {} (C);	  
	  \end{tikzpicture}}
    \caption{ \small Rotation poset in Example \ref{exGILBERT}.}
\end{figure}

\label{exIRVING3}
\end{ex}

This rotation poset shows that there are (potentially many) other stable marriages than the man-optimal or woman-optimal stable marriages. These other stable marriages are likely to be fairer than $\mathbf{x}^m$ and $\mathbf{x}^w$ as they are both extreme cases. 
In order to compute a fair stable marriage, the optimization of several aggregation functions has been investigated.\\ [0.5ex]
-- Utilitarian approach: $\sum_{i=1}^n \rk(m_i,\mu_\mathbf{x}(m_i))\!+\!\sum_{j=1}^n \rk(w_j,\mu_\mathbf{x}(w_j))$, which can be minimized in $O(n^3)$ \citep{Feder1994}).\\
-- Egalitarian approach: $\max \{\rk(p,\mu_\mathbf{x}(p)):p \in \mathcal{M}\cup \mathcal{W}\}$, which can also be minimized in $O(n^2)$ \citep{gusfield1987three}.\\
-- Sex-equal stable marriage:  $|\sum_{i=1}^n \rk(m_i,\mu_\mathbf{x}(m_i)) - \sum_{j=1}^n \rk(w_j,\mu_\mathbf{x}(w_j))|$, the minimization of which is NP-hard \citep{kato1993complexity}.\\ 
-- Balanced stable marriage: $\max \{\sum_{i=1}^n \rk(m_i,\mu_\mathbf{x}(m_i)) , \sum_{j=1}^n \rk(w_j,\mu_\mathbf{x}(w_j))\}$, the minimization of which is NP-hard \citep{manlove2013algorithmics}.\\[0.5ex]

Our contribution differs with previous works on the fair stable marriage problem. Indeed, we optimize a generalized Gini index on disutility values.\\ 


Given a matching $\mathbf{x}$, the \emph{disutility} $d(m_i,\mathbf{x})$ (also called \emph{dissatisfaction}) of a man $m_i$ is defined by $d(\rk(m_i,\mu_\mathbf{x}(m_i)))$, where $d:\mathbb{N}\rightarrow \mathbb{Q}^+$, is a strictly increasing function called disutility function. The disutility values $d(w_j,\mathbf{x})$ are defined similarly for women. 
Every stable marriage induces therefore a disutility vector:
\begin{equation*}
\mathbf{d}(\mathbf{x}) = (d(m_1,\mathbf{x}),\ldots,d(m_n,\mathbf{x}),d(w_1,\mathbf{x}),\ldots,d(w_n,\mathbf{x}))
\end{equation*}
with $\nbAgents = 2n$ components. Note that the use of disutility values (often called weights) is a common way to extend the traditional framework where the aggregation function is applied on rank values (see e.g., \cite{teo1998geometry,gusfield1989stable}). Using a unique disutility function for all agents guarantees that they all have the same importance in the aggregation operation. Indeed, the disutility values assigned to the ranks do not depend on the agent's identity. Note that both the egalitarian and the utilitarian variants of the stable marriage problem remain polynomially solvable if one uses disutility values.

\begin{ex}
We come back to Example~\ref{exGILBERT}. Let $d$ be the disutility function defined by $d(i) = (i-1)^2$, then the disutility values are given by the matrices $d_M$ and $d_W$ below where $d_M[i][j]$ (resp. $d_W[j][i]$) is the disutility of $m_i$ (resp. $w_j$) if he (resp. she) is matched with $w_j$ (resp. $m_i$). 

\scalebox{0.9}{\parbox{\columnwidth}{%
\begin{minipage}[c]{.46\linewidth}
\[
d_M: \begin{pmatrix}
0 & 1 & 4 & 9 & 16 & 25 & 36 & 49 & 64 & 81\\
1 & 0 & 4 & 9 & 16 & 25 & 36 & 49 & 64 & 81\\
1 & 4 & 0 & 9 & 16 & 25 & 36 & 49 & 64 & 81\\
1 & 4 & 9 & 25 & 36 & 16 & 0 & 49 & 64 & 81\\
1 & 4 & 9 & 25 & 36 & 0 & 16 & 49 & 64 & 81\\
1 & 4 & 9 & 0 & 16 & 36 & 25 & 49 & 64 & 81\\
1 & 4 & 9 & 16 & 0 & 36 & 25 & 49 & 64 & 81\\
36 & 49 & 64 & 1 & 4 & 9 & 25 & 0 & 81 & 16\\
36 & 49 & 64 & 1 & 25 & 4 & 9 & 81 & 16 & 0\\
36 & 49 & 64 & 1 & 4 & 25 & 9 & 16 & 0 & 81
\end{pmatrix}
\]
\end{minipage}\hfill
\begin{minipage}[c]{.46\linewidth}
\[
d_W: \begin{pmatrix}
0 & 1 & 4 & 9 & 16 & 25 & 36 & 49 & 64 & 81\\
0 & 1 & 4 & 9 & 16 & 25 & 36 & 49 & 64 & 81\\
0 & 1 & 4 & 9 & 16 & 25 & 36 & 49 & 64 & 81\\
0 & 1 & 4 & 49 & 64 & 36 & 9 & 16 & 25 & 81\\
0 & 1 & 4 & 49 & 64 & 9 & 36 & 16 & 25 & 81\\
0 & 1 & 4 & 9 & 36 & 49 & 64 & 16 & 25 & 81\\
0 & 1 & 4 & 36 & 9 & 49 & 64 & 16 & 25 & 81\\
16 & 0 & 25 & 36 & 49 & 64 & 9 & 4 & 81 & 1\\
0 & 1 & 16 & 25 & 36 & 49 & 64 & 81 & 4 & 9\\ 
0 & 1 & 4 & 9 & 16 & 25 & 36 & 64 & 81 & 49  
\end{pmatrix}
\]
\end{minipage}
}} \label{exIRVING3}

\end{ex}


Let $\mathbf{d} = (d_1,\ldots,d_N)$ denote a disutility vector. The generalized Gini index~\citep{weymark1981generalized} is defined as follows:
\begin{defi}
Let ${\bm \lambda} = (\lambda_1, \ldots, \lambda_\nbAgents)$ be a vector of weights such that $\lambda_1 \ge \ldots \ge \lambda_\nbAgents$. The $\GGI_{{\bm \lambda}}(\cdot)$ aggregation function induced by ${\bm \lambda}$ is defined by:
\begin{equation*}
\GGI_{{\bm \lambda}}(\mathbf{d}) = \sum_{i=1}^{\nbAgents} \lambda_{i} d^\downarrow_i,
\end{equation*} 
where $\mathbf{d^\downarrow}$ denotes the vector $\mathbf{d}$ ordered by nonincreasing values, i.e.,  $d^\downarrow_1 \geq d^\downarrow_2 \geq \ldots \geq d^\downarrow_\nbAgents$. 
\end{defi} 
The weights of the GGI aggregation function may be defined in a variety of manner. For instance, the weights initially proposed for the Gini social-evaluation function are:
\begin{equation} \label{gCoeff}
\lambda_i = (2(\nbAgents-i)+1)/\nbAgents^2 \quad \forall i \in \{1,\ldots,N\}
\end{equation}

\begin{ex}
Coming back to Example~\ref{exGILBERT}, if the weights ${\bm \lambda}$ are defined by Equation~\ref{gCoeff}  and the disutility function is defined by $d(i) = i$, the GGI values of the different stable marriages are (the lower the better):

{\small
\begin{center}
\begin{tabular}{ccC{1.75cm}}
\hline 
matching $\mathbf{x}$ & ordered vectors $\mathbf{d}^\downarrow(\mathbf{x})$ & $\GGI_{{\bm \lambda}}(\mathbf{d}(\mathbf{x}))$\\
\hline  
$\mathbf{x}^1$ & $(10, 7, 7, 7, 7, 4, 3, 3, 2, 1, 1, 1, 1, 1, 1, 1, 1, 1, 1, 1)$ & 4.4525 \\
$\mathbf{x}^2$ & $(10, 7, 7, 5, 5, 4, 4, 4, 3, 3, 2, 1, 1, 1, 1, 1, 1, 1, 1, 1)$ & 4.4725 \\
$\mathbf{x}^3$ & $(10, 7, 7, 5, 5, 4, 4, 4, 3, 3, 2, 1, 1, 1, 1, 1, 1, 1, 1, 1)$ & 4.4725 \\
$\mathbf{x}^4$ & $(10, 5, 5, 5, 5, 4, 4, 4, 4, 4, 3, 3, 2, 1, 1, 1, 1, 1, 1, 1)$ & 4.3925 \\
$\mathbf{x}^5$ & $(9 , 5, 5, 5, 5, 5, 5, 5, 4, 4, 4, 4, 3, 3, 2, 2, 1, 1, 1, 1)$ & 4.74 \\
\hline
\end{tabular}
\end{center}
}

\noindent We thus observe that using a GGI aggregation function makes it possible to obtain $\mathbf{x}^4$ as an optimal stable marriage.
\end{ex}

The GGI is also known in multicriteria decision making under the name of \emph{ordered weighted average}~\citep{yager1988ordered}. This aggregation function, to minimize, is well-known to satisfy the Pigou-Dalton transfer principle if $\lambda_1\!>\!\lambda_2\!>\!\ldots\!>\!\lambda_\nbAgents$:
\begin{defi}
An aggregation function $F$ satisfies the \emph{transfer principle} if for any $\mathbf{d} \in (\mathbb{R}^+)^\nbAgents$ and $\varepsilon \in (0,d_j-d_i)$ where $d_j>d_i$:\\
\begin{equation*}
F(d_1, \ldots, d_i + \varepsilon, \ldots, d_j - \varepsilon, \ldots , d_\nbAgents) < F(d_1, \ldots, d_\nbAgents).
\end{equation*}
\end{defi}
This condition states that the overall welfare should be improved by any transfer of disutility from a ``less happy'' agent $j$ to a happier agent $i$ given that this transfer reduces the gap between the disutilities of agent $i$ and $j$.
 We can now define the GGI Stable Marriage problem.\\[2ex] 
\fbox{\parbox{12cm}{
\textbf{GGI Stable Marriage (GGISM)}\\
\emph{INSTANCE:} Two disjoint sets of size $n$, the men and the women; for each person, a preference list containing all the members of the opposite sex; a vector of weight parameters ${\bm \lambda}$ and a disutility function $d$.\\
\emph{SOLUTION:} A stable marriage $\mathbf{x}$.\\
\emph{MEASURE:} $\GGI_{{\bm \lambda}}(\mathbf{d}(\mathbf{x}))$ (to minimize).
}}

\section{Complexity of the GGISM Problem} \label{sec:complexity}

The GGISM problem extends both the egalitarian and the utilitarian approaches to the stable marriage problem. Indeed, if the weights of the GGI operator are ${\bm \lambda} = (1,\ldots,1)$, one obtains the sum operation. If the weights are ${\bm \lambda} = (1,0,\ldots,0)$, one obtains the max operation. While both variants are polynomially solvable problems, the following result states that the GGISM problem is NP-hard:

\begin{thr}
The GGISM problem is NP-hard.
\end{thr}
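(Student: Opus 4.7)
The plan is to reduce from a known NP-complete combinatorial problem with a numerical flavour, most naturally PARTITION (or 3-PARTITION if one wants strong NP-hardness). Since both the utilitarian (${\bm \lambda}=(1,\ldots,1)$) and egalitarian (${\bm \lambda}=(1,0,\ldots,0)$) variants remain polynomial even with disutilities, the hardness must come from the ``intermediate'' behaviour of $\GGI_{{\bm \lambda}}$, which is strictly Schur-convex when the weights are strictly decreasing and positive. This suggests a reduction in which splitting a set of numbers into two equal-sum parts is encoded as the choice of which rotations to eliminate.

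First I would, given a PARTITION instance $(a_1,\ldots,a_k)$ with $\sum_i a_i = 2B$, construct a stable marriage instance whose rotation poset is an antichain of $k$ independent rotations $\rho_1,\ldots,\rho_k$. By the correspondence between closed subsets of the rotation poset and stable marriages (Theorem~1 of the excerpt), every stable marriage is then characterised by the subset $S \subseteq \{1,\ldots,k\}$ of eliminated rotations, giving $2^k$ stable marriages indexed by $\mathbf{x}^S$. The gadget for each $\rho_i$ is a small cyclic preference structure in the style of Irving-Leather involving a few men and women, and the $k$ gadgets are concatenated so that no gadget precedes another in the poset.

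Second, I would attach ``accounting'' agents whose match depends only on whether a specific rotation has been eliminated, so that eliminating $\rho_i$ raises one man's disutility by $a_i$ and simultaneously lowers one woman's disutility by $a_i$, relative to a common baseline $b$. Consequently, the disutility vector $\mathbf{d}(\mathbf{x}^S)$ contains, up to constant padding, the values $\{b + a_i : i \in S\}$ among the men and $\{b + a_i : i \notin S\}$ among the women. Choosing $b$ large enough (using the freedom in the disutility function $d$) so that these $2k$ values occupy the top of the sorted vector $\mathbf{d}^\downarrow(\mathbf{x}^S)$, and choosing ${\bm \lambda}$ strictly decreasing on those top positions, the strict Pigou-Dalton transfer property of $\GGI_{{\bm \lambda}}$ implies that $\GGI_{{\bm \lambda}}(\mathbf{d}(\mathbf{x}^S))$ is minimized over $S$ iff $\sum_{i \in S} a_i = \sum_{i \notin S} a_i$. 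A PARTITION YES instance then corresponds to a GGI value at most some explicitly computable threshold, while a NO instance yields a strictly larger value.

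The hard part will be step one: producing preference lists whose rotation poset is exactly the desired antichain of $k$ rotations, while simultaneously allowing the accounting agents of step two to be wired in without introducing spurious stable marriages, additional rotations, or unwanted precedence constraints that would break the bijection with $2^{\{1,\ldots,k\}}$. A second delicate point is ensuring that across all $2^k$ candidate disutility vectors, the top $2k$ entries of $\mathbf{d}^\downarrow$ are always the $2k$ accounting disutilities, so that the Schur-convexity argument applies uniformly; this can be arranged by padding $d$ so that all non-accounting disutilities lie strictly below $b$. If the gadget design becomes too intricate, a fallback is to reduce instead from the already NP-hard balanced or sex-equal stable marriage problem, by choosing ${\bm \lambda}$ and $d$ so that $\GGI_{{\bm \lambda}}(\mathbf{d}(\mathbf{x}))$ coincides, on the instances of those reductions, with their respective objective functions.
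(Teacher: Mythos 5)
Your proposal shares the paper's structural skeleton (an antichain rotation poset built from Irving--Leather cyclic shortlist patterns, the closed-subset/stable-marriage correspondence, and padding agents to pin positions in the sorted vector), but the core accounting gadget is provably information-free for the GGI objective, and this is a fatal gap. $\GGI_{{\bm \lambda}}$ is an \emph{anonymous} (symmetric) function of the pooled disutility vector: it sees only $\mathbf{d}^\downarrow$, never which agent, or which sex, carries which value. Your encoding stores the partition entirely in identities: by your own description, for each $i$ exactly one of the two accounting agents of gadget $i$ has disutility $b+a_i$ (the man if $i\in S$, the woman if $i\notin S$) while the other sits at the baseline/padding level, so gadget $i$ contributes the \emph{same multiset} of values for every $S$. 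Hence $\mathbf{d}^\downarrow(\mathbf{x}^S)$ is identical across all $2^k$ stable marriages, $\GGI_{{\bm \lambda}}$ is constant on $\mathcal{X}$, and the Schur-convexity/Pigou--Dalton step has nothing to act on: no two candidate vectors differ by a transfer because they do not differ at all. The same anonymity sinks your fallback: the sex-equal objective $|\sum_i \rk(m_i,\mu_\mathbf{x}(m_i))-\sum_j \rk(w_j,\mu_\mathbf{x}(w_j))|$ is not even monotone in the disutilities (GGI with nonnegative weights is), and the balanced objective $\max\{\sum_M,\sum_W\}$ depends on group membership; two marriages with identical pooled sorted vectors can have different sex-equal or balanced values, so no choice of ${\bm \lambda}$ and $d$ can make $\GGI_{{\bm \lambda}}$ coincide with either on the instances of those reductions. (A minor additional point: PARTITION would anyway only yield weak NP-hardness, whereas the paper reduces from strongly NP-hard Min 2-SAT.)

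The idea you are missing is \emph{coupling across gadgets}, with the combinatorial information encoded in the values and sorted positions rather than in agent identities. The paper also builds one independent rotation per variable, but each variable occurs in several clauses, and each clause $c_j$ has two \emph{decisive agents} whose match rank ($2j+1$ versus $2j+2$) reveals whether the corresponding rotation was eliminated; $2j$ fictitious agents prefixed to the preference lists force the two decisive agents of $c_j$ into fixed slots of $\mathbf{d}^\downarrow$ for every stable marriage. Then geometrically scaled disutility gaps $d(2j+2)-d(2j+1)=n_c^{-j}$ paired with geometric weights $n_c^{j}, n_c^{j+1}$ make $\GGI_{{\bm \lambda}}(\mathbf{d}(\mathbf{x}))$ an affine, strictly decreasing function of the number of unsatisfied clauses (Lemma~1 of the paper), so minimizing the GGI maximizes the number of unsatisfied clauses, i.e., solves Min 2-SAT. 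A repaired numerical reduction in your style would similarly have to make the \emph{distribution} of disutilities, not their assignment to sexes, vary with $S$ --- e.g., elimination producing value pairs like $\{b+\delta_i,\,b-\delta'_i\}$ versus $\{b,b\}$ --- but then with an antichain poset and per-gadget separable contributions the optimum tends to decompose gadget-wise and become trivial; some cross-gadget interaction, as in the paper's clause structure, appears unavoidable.
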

\begin{proof}
We make a reduction from Minimum 2-Satisfiability, which is strongly NP-hard \citep{kohli1994minimum}.\\

\noindent \fbox{\parbox{12cm}{
\textbf{Minimum 2-Satisfiability (Min 2-SAT)}:\\
\emph{INSTANCE}: A set $V$ of variables, a collection $C$ of disjunctive clauses of at most 2 literals, where a literal is a variable or a negated variable in $V$.\\
\emph{SOLUTION}: A truth assignment for $V$.\\
\emph{MEASURE}: Number of clauses satisfied by the truth assignment (to minimize).
}}
\vspace{0.5cm}

To illustrate the reduction, we will use the following 2-SAT instance: 
\begin{align}
V &= \{v_1,v_2,v_3,v_4,v_5,v_6\} \label{form1a}\\
C &= \{ (v_1 \lor v_2), (\lnot v_2 \lor \lnot v_4), (\lnot v_1 \lor v_3), ( v_3 \lor \lnot v_4), v_2 , (v_5\lor v_6)\} \label{form1b}
\end{align}
As a preliminary step, note that we can get rid of variables that are present in only one clause. Such a variable is set to true if it is present as a negative literal in the clause and to false otherwise. It can then be removed from the instance. Furthermore, we can make sure that there are exactly two literals in each clause (by duplicating literals). For example, the instance described by Equations \ref{form1a} and \ref{form1b} can be modified to:
\begin{align}
V &= \{v_1,v_2,v_3,v_4\} \label{form2a}\\
C &= \{ (v_1 \lor v_2), (\lnot v_2 \lor \lnot v_4), (\lnot v_1 \lor v_3), ( v_3 \lor \lnot v_4), (v_2 \lor v_2)\} \label{form2b}
\end{align}
In the following we will denote by $n_v = |V|$ the number of variables and by $n_c = |C|$ the number of clauses. In the previous example $n_v = 4$ and $n_c = 5$. Furthermore, we will denote by $c_i$ the $i^{th}$ clause in $C$.\\

We are now going to create an instance of the GGISM problem such that: 
\begin{itemize}
\item There is a one-to-one correspondence between the stable marriages and the truth assignments for $V$. 
\item A stable marriage minimizing the GGI of the agent's disutilities corresponds to a truth assignment of $V$ minimizing the number of clauses that are satisfied.
\end{itemize}

In order to create a one-to-one correspondence between the stable marriages and the truth assignments for $V$, we are going to create a rotation $\rho_i$ for each variable $v_i \in V$. Each of these rotations will be exposed in the shortlists from the man-optimal stable marriage for the instance under construction. Additionally, we will ensure that these rotations will be the only ones of the stable marriage instance. In other words, the rotation poset will have one vertex per variable and no edge, as illustrated in Figure~\ref{fig:posetProof}.

\begin{figure}[!h] 
   \centering
    \scalebox{.7}{\begin{tikzpicture}[->,>=stealth',shorten >=1pt,auto,node distance=3cm, semithick]

  \node[circle,draw,text=black, text width = 0.5cm] (A)   at (0,2)                 {$~\rho_1$};
	\node[circle,draw,text=black,text width = 0.5cm] (B)   at (0,0)                 {$~\rho_2$};
  \node[circle,text=black] (C)   at (0,-1.5)                 {$\vdots$};
  \node[circle,draw,text=black,text width = 0.5cm] (E)   at (0, -3)                 {$\rho_{n_v}$};
\end{tikzpicture}}
\caption{\label{fig:posetProof}Rotation poset of the stable marriage instance generated by the reduction.}
\end{figure}
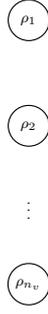

We now give the ``meaning'' of these rotations. Let's recall that in a stable marriage there is a one-to-one correspondence between the closed subsets of nodes of the rotation poset and the stable marriages. Now let $\mathbf{x}$ be a stable marriage corresponding to a closed subset $R$ of rotations, then the corresponding truth assignment over $V$ consists in setting $v_i = 1$ if $\rho_i\in R$ and $v_i = 0$ otherwise. Thus in the generated stable marriage instance, the man-optimal stable marriage (i.e., $R = \emptyset$) corresponds to a truth assignment where all variables in $V$ are set to 0 while the woman-optimal stable marriage (i.e., $R = \{\rho_i| v_i \in V\}$) corresponds to a truth assignment where all variables in $V$ are set to 1.

We now describe more precisely the fashion in which rotations $\rho_i$ are generated. For each variable $v_i$, we create a man-woman pair $(m_{ij}, w_{ij})$ for each clause $c_j$ that involves $v_i$ either as a positive or negative literal. If variable $v_i$ is present two times in a clause $c_j$, then two man-woman pairs $(m_{ij}, w_{ij})$ and $(m_{ij}', w_{ij}')$ are created. This induces the creation of $2n_c$ men and $2n_c$ women in the instance. The rotation $\rho_i$ then involves all the men and women induced by variable $v_i$. 
For example, in the instance described by Equations \ref{form2a} and \ref{form2b}, $\rho_2$ involves men $m_{21},m_{22},m_{25},m_{25}'$ and women $w_{21},w_{22},w_{25},w_{25}'$ as variable $v_2$ is present in $c_1, c_2$ and $c_5$. Let $r$ denote the number of times variable $v_i$ appears in $C$. The rotation $\rho_i$ is then induced by the following patterns in the shortlists of men $\{m_{ij}|v_i\in c_j\}$ and women  $\{w_{ij}|v_i\in c_j\}$:

\begin{minipage}[c]{.46\linewidth}
\begin{align*}
m_{ij_0} &: w_{ij_0} \succ^m_{ij_0} w_{ij_1} \\
m_{ij_1} &: w_{ij_1} \succ^m_{ij_1} w_{ij_2} \\
\vdots &\\
m_{ij_{r-2}} &: w_{ij_{r-2}} \succ^m_{ij_{r-2}} w_{ij_{r-1}} \\
m_{ij_{r-1}} &: w_{ij_{r-1}} \succ^m_{ij_{r-1}} w_{ij_0} 
\end{align*}
\end{minipage}
\begin{minipage}[c]{.46\linewidth}
\begin{align*}
w_{ij_0} &: m_{ij_{r-1}} \succ^w_{ij_0} m_{ij_0} \\
w_{ij_1} &: m_{ij_0} \succ^w_{ij_1} m_{ij_1} \\
\vdots &\\
w_{ij_{r-2}} &: m_{ij_{r-3}} \succ^w_{ij_{r-2}} m_{ij_{r-2}} \\
w_{ij_{r-1}} &: m_{ij_{r-2}} \succ^w_{ij_{r-1}} m_{ij_{r-1}} 
\end{align*}
\end{minipage}
\vspace{0.5cm}

For instance, rotation $\rho_2$ is induced by the following pattern in the shortlists:

\begin{minipage}[c]{.46\linewidth}
\begin{align*}
m_{21} &: w_{21} \succ^m_{21} w_{22} \\
m_{22} &: w_{22} \succ^m_{22} w_{25} \\
m_{25} &: w_{25} \succ^m_{25} w_{25}' \\
m_{25}'&: w_{25}' \succ^{m'}_{25} w_{21} 
\end{align*}
\end{minipage}
\begin{minipage}[c]{.46\linewidth}
\begin{align*}
w_{21}  &: m_{25}' \succ^w_{21} m_{21} \\
w_{22}  &: m_{21}  \succ^w_{22} m_{22} \\
w_{25}  &: m_{22}  \succ^w_{25} m_{25} \\
w_{25}' &: m_{25}  \succ^{w'}_{25} m_{25}' 
\end{align*}
\end{minipage} 
\vspace{0.5cm}

\begin{table}
\begin{center}
\begin{tabular}{cccc}
\hline
clause $c_j$ & in & out & decisive agents \\
\hline
$v_i \wedge v_k$ & & $\rho_i, \rho_k$ & $m_{ij}$, $m_{kj}$ \\
$v_i \wedge \neg v_k$ & $\rho_k$ &  $\rho_i$ & $m_{ij}$, $w_{kj}$ \\
$\neg v_i \wedge v_k$ & $\rho_i$ &  $\rho_k$ & $w_{ij}$, $m_{kj}$ \\
$\neg v_i \wedge \neg v_k$ & $\rho_i, \rho_k$ & & $w_{ij}$, $w_{kj}$ \\
$v_i \wedge v_i$ & & $\rho_i$ & $m_{ij}$, $m_{ij}'$ \\
$\neg v_i \wedge \neg v_i$ & $\rho_i$ & & $w_{ij}$, $w_{ij}'$ \\
\hline
\end{tabular}
\caption{\label{tab:Proof}Clause $c_j$ is not satisfied iff the rotations of the second (resp. third) column are included (resp. not included) in $R$. Consequently, clause $c_j$ is not satisfied iff the two agents in the last column are matched with their choices of rank $\rk^+(\cdot)$.}
\end{center}
\end{table}

Note that each man $m_{ij}$ or woman $w_{ij}$ is involved in one and only one rotation, which is $\rho_i$. As a consequence, each man or woman in the generated instance has only two possible matches in a stable marriage, namely $w_{ij_k}$ and $w_{ij_{k+1}}$ (modulo the size $r$ of rotation $\rho_i$) for $m_{ij_k}$, and $m_{ij_{k-1}}$ (modulo $r$) and $m_{ij_{k}}$ for $w_{ij_k}$. For simplicity, we will denote by $\rk^+(m_{ij})$ (resp. $\rk^-(m_{ij})$) the rank of the best (resp. worst) possible match for $m_{ij}$ in a stable marriage. Notations $\rk^+(w_{ij})$ and $\rk^-(w_{ij})$ are defined similarly for women.


Given a stable marriage characterized by a set $R$ of rotations, it is possible to determine if clause $c_j$ is satisfied by examining which rotations belong to $R$. According to the form of clause $c_j$, columns ``in'' and ``out'' of Table~\ref{tab:Proof} indicate which rotations should be included or not in $R$ so that $c_j$ is not satisfied. Assuming that $c_j$ involves variables $v_i$ and $v_k$ (or possibly their negations), it is sufficient to examine the matches of two specific agents among $m_{ij}, m_{kj}, w_{ij}, w_{kj}$ to determine if rotations $\rho_i$ and $\rho_k$ belong or not to $R$. These two specific agents are called \emph{decisive agents of $c_j$} in the following. We have indeed $\rho_i \in R$ iff the rank of the match of $m_{ij}$ is $\rk^+(m_{ij})$. Similarly, we have $\rho_i \not\in R$ iff the rank of the match of $w_{ij}$ is $\rk^+(w_{ij})$. Put another way, $m_{ij}$ (resp. $w_{ij}$) is a decisive agent of $c_j$ if $v_i$ (resp. $\neg v_i$) belongs to $c_j$. The clause $c_j$ is not satisfied iff the two decisive agents are with their match of rank $\rk^+(\cdot)$. The decisive agents according to the form of clause $c_j$ are given in the last column of Table~\ref{tab:Proof}.

For illustration, let us return to the 2-SAT instance described by Equations~\ref{form2a} and \ref{form2b}. Given the stable marriage instance generated by the reduction, and a stable marriage $\mathbf{x}$, clause $v_1\land v_2$ is \emph{not} satisfied iff $\rk(m_{11},\mu_\mathbf{x}(m_{11})) = \rk^+(m_{11})$ and $\rk(m_{21},\mu_\mathbf{x}(m_{21})) = \rk^+(m_{21})$. More generally, it is possible to count the number of clauses that are \emph{not} satisfied by examining the ranks of the matches of the decisive agents of each clause.



We will soon explain how to use a GGI operator to count the number of clauses that are not satisfied in the 2-SAT instance. Beforehand, we need to introduce fictitious agents in order to control the positions of the decisive agents in the ordered vector of disutilities for every stable marriage. More precisely, we introduce four fictitious agents $m_j$, $m_j'$, $w_j$, $w_j'$ per clause $c_j$ such that $m_j$ (resp. $m_j'$) is the first choice of $w_j$ (resp. $w_j'$) and vice-versa. Thus $m_j$ (resp. $m_j'$) can only be matched to $w_j$ (resp. $w_j'$) in a stable marriage, and therefore the fictitious agents will not interfere with the possible matches of the other agents. 

The fictitious agents are placed in the preference lists of the other agents such that $\rk^+(\cdot) = 2j+1$ and  $\rk^-(\cdot) = 2j+2$ for the two decisive agents of clause $c_j$. Furthermore, $\rk^+(\cdot) = 1$ and  $\rk^-(\cdot) = 2$ for the remaining (non-decisive) agents. Note that $2j+1>2$ as $j\ge 1$ and therefore the two decisive agents of $c_j$ are at positions $2(n_c-j)+1$ and $2(n_c-j)+2$ in the permutation that ranks the agents by non-increasing disutilities.

To achieve these properties, we position $2j$ fictitious agents at the beginning of the preference list of the decisive agents of clause $c_j$ (e.g., $m_1 m_1' \ldots m_j m_j'$ for a decisive agent $w_{ij}$). These agents are positioned just before the two possible matches of the agent in a stable marriage. Regarding the non-decisive agents, their two possible matches in a stable marriage are simply placed at the beginning of their preference lists.

For illustration, in the 2-SAT instance described by Equations~\ref{form2a} and \ref{form2b}, the preference list of agent $w_{22}$ (who is a decisive agent of $c_2$) is:
$$w_{22} : m_1 \succ_{22}^w m_1' \succ_{22}^w m_2 \succ_{22}^w m_2' \succ_{22}^w m_{21} \succ_{22}^w m_{22} \succ_{22}^w\ldots $$
and the preference list of agent $m_{22}$ (who is \emph{not} a decisive agent of $c_2$) is:
$$m_{22} : w_{22} \succ_{22}^m w_{25}\succ_{22}^m \ldots $$

This construction is illustrated in Figure \ref{prefexRed} (where symbols $\succ$ are omitted for readability reasons) for the Minimum 2-Satisfiability instance defined by Equations \ref{form2a} and \ref{form2b}. The preference lists of the agents are only partially given but note that they can be completed in any consistent way that would lead to complete and transitive orders.

\begin{figure}
\begin{minipage}[c]{.46\linewidth} 
\begin{align*}
m_{1}&: w_{1} \ldots\\ 
m_{1}'&: w_{1}' \ldots\\ 
\vdots\\ 
m_{5}&: w_{5} \ldots\\ 
m_{5}'&: w_{5}' \ldots\\[2ex]
m_{11}&: w_1 w_1' w_{11}  w_{13} \ldots\\
m_{13}&: w_1 w_1' w_2 w_2' w_3 w_3' w_{13} w_{11} \ldots\\[2ex]
m_{21}&: w_1 w_1' w_{21} w_{22} \ldots\\
m_{22}& : w_{22} w_{25} \ldots\\
m_{25}& : w_1 w_1' w_2 w_2' w_3 w_3' w_4 w_4' w_5 w_5' w_{25}  w_{25}' \ldots\\
m_{25}'&: w_1 w_1' w_2 w_2' w_3 w_3' w_4 w_4' w_5 w_5' w_{25}' w_{21} \ldots\\[2ex]
m_{33}&: w_1 w_1' w_2 w_2' w_3 w_3' w_{33} w_{34} \ldots\\
m_{34}&: w_1 w_1' w_2 w_2' w_3 w_3' w_4 w_4' w_{34} w_{33}\ldots\\[2ex]
m_{42}&: w_{42} w_{44}\ldots\\
m_{44}&: w_{44} w_{42}\ldots
\end{align*}
\end{minipage} \hfill
\begin{minipage}[c]{.46\linewidth}
\begin{align*}
w_{1}&: m_{1} \ldots\\ 
w_{1}'&: m_{1}' \ldots\\ 
\vdots\\
w_{5}&  : m_{5} \ldots\\ 
w_{5}'& : m_{5}' \ldots\\[2ex]  
w_{11}&:  m_{13}  m_{11} \ldots \\
w_{13}&:  m_{11}  m_{13} \ldots \\[2ex]
w_{21}&:  m_{25}' m_{21} \ldots\\
w_{22}& : m_1 m_1' m_2 m_2' m_{21} m_{22} \ldots\\
w_{25}&:  m_{22} m_{25} \ldots \\
w_{25}'&: m_{25} m_{25}' \ldots \\[2ex]
w_{33}&:  m_{34} m_{33} \ldots\\
w_{34}&:  m_{33} m_{34} \ldots\\[2ex]
w_{42}&:  m_1 m_1' m_2 m_2' m_{44} m_{42} \ldots\\
w_{43}&:  m_1 m_1' m_2 m_2' m_3 m_3' m_4 m_4' m_{42} m_{44} \ldots 
\end{align*}
\end{minipage}
\caption{\label{prefexRed}Preference lists obtained for the min 2-SAT instance of Equations~\ref{form2a} and \ref{form2b}.}
\end{figure}


We now explain how to define the disutility values attributed to each rank, as well as the weights of the GGI operator, so that the number of unsatisfied clauses can be inferred from the GGI value of the stable marriage.

\paragraph{Disutility values and weights of the GGI.} We first recall that each clause $c_j$ induces 6 agents that are matched either with their first or second choices and 2 agents (the decisive ones) that are matched with their choices of rank $2j+1$ or $2j+2$. By construction of the preference lists, note that no agent can be matched with a partner that is ranked strictly beyond $2n_c+2$ in his/her preference list. Therefore the values of $d(i)$ for $i > 2n_c+2$ play no role, and can be fixed arbitrarily as long as they are increasing with $i$ and strictly greater than $d(2n_c+2)$.\\
-- The increasing disutility values for ranks 1 to $2n_c+2$ are defined as follows (assuming that $n_c \ge 2$):
\begin{align*}
d(1) &= 0 \\
d(2) &= 1 \\
d(2j+1)  &= j + 1,\quad \forall j\in\{1,\ldots, n_c\}\\
d(2j+2)&= j + 1 + n_c^{-j}, \quad \forall j\in\{1,\ldots, n_c\}
\end{align*}
-- The non-increasing weights of the GGI are defined as follows: 
\begin{equation*}
{\bm \lambda} = (\underbrace{n_c^{n_c+1},n_c^{n_c},n_c^{n_c},n_c^{n_c-1}, \ldots, n_c^{3},n_c^{2},n_c^{2},n_c^{1}}_{2n_c\text{ weights}}, \underbrace{0, \ldots, 0}_{6n_c\text{ weights}}).
\end{equation*}
We recall that the $2n_c$ agents with the highest disutility values are the decisive agents (the two decisive agents of clause $c_j$ are matched with an agent of rank $2j+1$ or $2j+2$) and the $6n_c$ agents with the lowest disutility values are the \emph{non}-decisive agents (who are matched to one of their two first choices). Consequently, the weight vector ${\bm \lambda}$ attributes a weight 0 in the GGI operator to the $6n_c$ non-decisive agents while, for each clause $c_j$, it attributes a weight $n_c^{j}$ (resp. $n_c^{j+1}$) to the most satisfied (resp. least satisfied) of the two decisive agents of $c_j$.  


An upper bound on the GGI value is given by $\Delta_u = \sum_{j=1}^{n_c} (n_c^{j} + n_c^{j+1})d(2j+2)$. This would correspond to a stable marriage, where for each $c_{j}$, the two decisive agents of $c_j$ are both matched to their choice of rank $2 j+2$. Similarly, a lower bound on the GGI value is given by $\Delta_l =  \sum_{j=1}^{n_c} (n_c^{j} + n_c^{j+1})d(2j+1)$ (if the two  decisive agents of $c_j$ are both matched to their choice of rank $2 j+1$). Simple calculations show that $\Delta_l=\Delta_u-n_c(1+n_c)$. 

These bounds are useful for establishing Lemma \ref{lem:Proof} below, that makes it possible to infer the number of unsatisfied clauses from the GGI value. The lower the GGI value, the higher the number of unsatisfied clauses. Hence, minimizing the GGI value amounts to maximizing the number of unsatisfied clauses, which concludes the proof.

 

\begin{lem} \label{lem:Proof}
For the GGI stable marriage instance obtained by the method described above, a stable marriage $\mathbf{x}$ corresponds to a truth assignment on $V$ for which the number of unsatisfied clauses is:
$$
\left\lfloor \frac{\Delta_u - \GGI_{{\bm \lambda}}(\mathbf{d}(\mathbf{x}))}{n_c+1}\right\rfloor
$$ 
\end{lem}

\noindent\textit{Proof of Lemma \ref{lem:Proof}}. 
We wish to show that if a stable marriage $\mathbf{x}$ corresponds to a truth assignment on $V$ with exactly $k$ unsatisfied clauses then:
$$ \Delta_u - (k+1)(n_c+1) < \GGI_{{\bm \lambda}}(\mathbf{d}(\mathbf{x})) \le \Delta_u - k(n_c+1), $$
from which the lemma straightforwardly follows. 
 
Assume that $k$ clauses $\{c_{j_{1}},\ldots,c_{j_{k}}\}$ are unsatisfied for the truth assignment induced by $\mathbf{x}$. Then for each $c_{j_l}$, the two decisive agents of $c_{j_l}$ are both matched to their choice of rank $2 j_l+1$. Hence: 
\begin{align*}
\GGI_{{\bm \lambda}}(\mathbf{d}(\mathbf{x})) &\leq \Delta_u - \sum_{l=1}^k (n_c^{j_l}+n_c^{j_l+1})(d(2j_l+2) - d(2j_l+1))\\
&= \Delta_u - \sum_{l=1}^k (n_c^{j_l}+n_c^{j_l+1})n_c^{-j_l} = \Delta_u - k(n_c+1)
\end{align*}
because each decisive agent of clause $c_{j_l}$, for $l\in \{1,\ldots,k\}$, has a disutility of $d(2j_l+1)$ and not $d(2j_l+2)$.

Now, let $\{c_{j_{1}},\ldots,c_{j_{n_c-k}}\}$ denote the satisfied clauses for the truth assignment induced by $\mathbf{x}$. Then for each $c_{j_l}$, at least one of the two decisive agents of $c_{j_l}$ is matched to his/her choice of rank $2 j_l+2$. In the best case (w.r.t. the GGI value), only one of the two is matched to his/her choice of rank $2 j_l+2$, and his/her weight in the GGI aggregation is $n_c^{j_l+1}$ because his/her disutility is the highest among the two agents. Hence:
\begin{align*}
\GGI_{{\bm \lambda}}(\mathbf{d}(\mathbf{x}))&\geq \Delta_l+\sum_{l=1}^{n_c-k} n_c^{j_l+1} (d(2j_l+2)-d(2j_l+1))\\
& = \Delta_u-n_c(1+n_c) +\sum_{l=1}^{n_c-k} n_c^{j_l+1} n_c^{-j_l}\\
& = \Delta_u-n_c(1+n_c) + (n_c-k)n_c\\
& = \Delta_u-(k+1)n_c\\
& > \Delta_u-(k+1)(n_c+1)
\end{align*} 
This concludes the proof of the lemma.
\end{proof}

\section{A 2-approximation Algorithm} \label{sec:approximation}

We now present a polynomial time 2-approximation algorithm for the GGI stable marriage problem.

The 2-approximation algorithm uses a linear programming formulation of the stable marriage problem, based on the rotation poset \citep{gusfield1989stable}. It is indeed well-known that the set of stable marriages can be characterized by the following set of inequalities where we have one binary variable $y(\rho)$ for each rotation in the rotation poset and:
\begin{equation}
y(\rho') - y(\rho) \leq 0 \label{eqRotPos} 
\end{equation} 
for each pair of rotations such that $\rho$ precedes $\rho'$. Variable $y(\rho)$ is equal to $1$ if rotation $\rho$ is included in the closed set of rotations associated to the stable marriage and 0 otherwise. Importantly, note that the extreme points of the polytope defined by constraints \ref{eqRotPos} for $0 \le y(\rho) \le 1,~\forall \rho$ are in one-to-one correspondence with the stable marriages of the instance \citep{gusfield1989stable}. 
Furthermore, the stable marriage $\mathbf{x}$ characterized by variables $y(\rho)$ can be inferred by using Equations~\ref{eq_xy1},~\ref{eq_xy2} and~\ref{eq_xy3} below.

To explain this point we introduce some notations. Let $\Gamma$ denote the set of man-woman pairs included in at least one stable matching. These pairs can be found by looking at the pairs that are created and broken by each rotation. Indeed, note that for each pair $(m,w) \in \Gamma$, there exists exactly one rotation, denoted by $\rhoget(m,w)$, that creates this pair (unless this pair is in $\mathbf{x}^m$) and exactly one rotation, denoted by $\rhobreak(m,w)$, that breaks this pair (unless this pair is in $\mathbf{x}^w$).  
Then, one can compute variables $x_{ij}$ corresponding to a set of variables $y(\rho)$ by using the following equations:
\begin{align}
x_{ij} & = 1 - y(\rhobreak(i,j)) ,\quad \forall (i,j) \in \Gamma \mbox{ s.t. } x_{ij}^m =1  \label{eq_xy1} \\
x_{ij} & = y(\rhoget(i,j)) ,\quad \forall (i,j) \in \Gamma \mbox{ s.t. } x_{ij}^w =1 \label{eq_xy2} \\
x_{ij} & = y(\rhoget(i,j))-y(\rhobreak(i,j)) ,\quad \forall (i,j) \in \Gamma \mbox{ s.t. } x_{ij}^m = x_{ij}^w =0 \label{eq_xy3} 
\end{align} 

\begin{ex}
Let us come back to Example~\ref{exGILBERT}. The pairs in $\Gamma$ are listed in the left column of Table~\ref{tab:getbreak}. The rotations $\rhoget(m,w)$ and $\rhobreak(m,w)$ for each pair $(m,w)$ are given in the middle and right columns.

\begin{table}
\begin{center}
\begin{tabular}{|l|l|l|}
  \hline
  $(m,w) \in \Gamma$ & $\rhoget(m,w)$ & $\rhobreak(m,w)$\\
  \hline
  $(m_1,w_1)$ & & \\
  $(m_2,w_2)$ & & \\
  $(m_3,w_3)$ & & \\
  $(m_4,w_7)$ & & $\rho_1$  \\
  $(m_4,w_6)$ & $\rho_1$ & \\
  $(m_5,w_6)$ & & $\rho_1$\\
  $(m_5,w_7)$ & $\rho_1$ & \\
  $(m_6,w_4)$ & & $\rho_2$\\
  $(m_6,w_5)$ & $\rho_2$ & \\
  $(m_7,w_5)$ & & $\rho_2$\\
  $(m_7,w_4)$ & $\rho_2$ & \\
  $(m_8,w_8)$ & & $\rho_3$\\
  $(m_8,w_{10})$ & $\rho_3$ & \\
  $(m_9,w_{10})$ & & $\rho_3$\\
  $(m_9,w_9)$ & $\rho_3$ & \\
  $(m_{10},w_9)$ & & $\rho_3$\\
  $(m_{10},w_8)$ & $\rho_3$ & \\
  \hline
\end{tabular}
\caption{\label{tab:getbreak}Rotations $\rhoget(m,w)$ and $\rhobreak(m,w)$ in Example~\ref{exGILBERT}.}
\end{center}
\end{table}
\end{ex}



A mathematical programming formulation of the GGISM problem reads as follows:
\begin{empheq}[left=\mathcal{P}
\empheqlbrace]{align} 
& \min_{\mathbf{d},\mathbf{x},\mathbf{y}} \GGI_{{\bm \lambda}}(\mathbf{d}) \notag \\
d^m_i & = \sum_{(i,j) \in \Gamma} x_{ij} d(\rk(m_i,w_j)) ,\quad \forall i \in \{1,\ldots,n\} \notag \\
d^w_j & = \sum_{(i,j) \in \Gamma} x_{ij} d(\rk(w_j,m_i)) ,\quad \forall j \in \{1,\ldots,n\} \notag \\
x_{ij} & = 1 - y(\rhobreak(i,j)) ,\quad \forall (i,j) \in \Gamma \mbox{ s.t. } x_{ij}^m =1 \label{const:x-y1} \\
x_{ij} & = y(\rhoget(i,j)) ,\quad \forall (i,j) \in \Gamma \mbox{ s.t. } x_{ij}^w =1  \label{const:x-y2} \\
x_{ij} & = y(\rhoget(i,j))-y(\rhobreak(i,j)) ,\quad \forall (i,j) \in \Gamma \mbox{ s.t. } x_{ij}^m = x_{ij}^w =0 \label{const:x-y3} \\
y(\rho') & - y(\rho) \le 0 ,\quad \forall (\rho,\rho') \mbox{ s.t. } \rho < \rho' \label{const:y} \\
d^m_i & \ge 0 ,\quad \forall i \in \{1,\ldots,n\} \notag\\
d^w_j & \ge 0 ,\quad \forall j \in \{1,\ldots,n\} \notag\\
x_{ij} & \ge 0 ,\quad \forall (i,j) \in \Gamma\notag\\ 
y(\rho) & \in \{0,1\},\quad \forall \rho \in P \notag
\end{empheq}

\noindent where $d_i^m$ (resp. $d_j^w$) represents the disutility of $m_i$ (resp. $w_j$), $\mathbf{d} = (d_1^m,\ldots,d_n^m,d_1^w,\ldots,d_n^w)$, and as usual:
\begin{itemize}
\item  $x_{ij}=1$ (resp. 0) if $(m_i,w_j)$ is (resp. is not) in the stable marriage $\mathbf{x}$,
\item $y(\rho)=1$ (resp. 0) if $\rho$ belongs (resp. does not) to the set of rotations characterizing $\mathbf{x}$,
\item  $P$ is the set of all rotations. 
\end{itemize}
Let us denote by $\widehat{\mathcal{P}}$ the linear programming relaxation of $\mathcal{P}$ where $y(\rho) \in \{0,1\}$ is replaced by $0 \le y(\rho) \le 1$. Importantly, note that variables $y(\rho)$ in an optimal solution to $\widehat{\mathcal{P}}$ are not necessarily integer because the objective function is non-linear (and therefore there does not necessarily an optimal vertex in the solution polytope).


A polynomial time 2-approximation algorithm can be obtained by rounding an optimal solution of $\widehat{\mathcal{P}}$. The 2-approximation algorithm writes as follows:\\ [1ex]
{\sc Rounding Algorithm}
\begin{enumerate}
\item Solve $\widehat{\mathcal{P}}$ and let $(\mathbf{\hat{d}},\mathbf{\hat{x}},\mathbf{\hat{y}})$ denote an optimal solution to $\widehat{\mathcal{P}}$; 
\item For each $\rho \in P$, set $y(\rho) = 1$  if $\hat{y}(\rho) \ge 0.5$, and $y(\rho) = 0$ otherwise;
\item Return the stable marriage $\mathbf{x}$ obtained from $\mathbf{y}$ by using constraints~\ref{const:x-y1}--\ref{const:x-y3} in $\mathcal{P}$.
\end{enumerate}

\begin{ex}
Coming back to Example \ref{exGILBERT}, assume that the weights of the GGI operator are defined by Equation~\ref{gCoeff} and that the disutility function is defined by $d(i) = i$. Then, an optimal solution $(\mathbf{\hat{d}},\mathbf{\hat{x}},\mathbf{\hat{y}})$ to $\widehat{\mathcal{P}}$  is characterized by $\hat{y}(\rho_1) = \hat{y}(\rho_2) = 0.75$ and $\hat{y}(\rho_3) = 0$ (for a GGI value of $4.3075$). For this instance, by {\sc Rounding Algorithm}, the obtained vector $\mathbf{y}$ is therefore $y(\rho_1) = y(\rho_2) = 1$ and $y(\rho_3) = 0$. This corresponds to stable marriage $\mathbf{x}^4$, which is in fact an optimal solution. 
\end{ex}

Steps 2 and 3 of the algorithm can obviously be performed in polynomial time.
In step 1, solving $\widehat{\mathcal{P}}$ can also be performed in polynomial time by using one of the linearizations of the GGI operator proposed by \cite{ogryczak2003solving}.
 The following lemma ensures that the returned solution is a 2-approximation of an optimal solution of $\mathcal{P}$:
\begin{lem}
For any feasible solution $(\mathbf{\hat{d}},\mathbf{\hat{x}},\mathbf{\hat{y}})$ of $\widehat{\mathcal{P}}$, the feasible solution $(\mathbf{d},\mathbf{x},\mathbf{y})$ of $\mathcal{P}$ obtained by setting 
$$ y(\rho) = \left \lbrace \begin{array}{cl}
            1  & \mbox{ if } \hat{y}(\rho) \ge 0.5,\\ 
            0  & \mbox{ otherwise}
            \end{array} \right.
$$            
 is  such that 
$\mathbf{\hat{d}} \ge \frac{1}{2}\mathbf{d}$ where $\ge$ is taken componentwise.  \label{lemhalf}
\end{lem}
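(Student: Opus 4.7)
The plan is to express each agent's disutility as an explicit affine function of the rotation variables $y(\rho)$ with nonnegative coefficients, then exploit the elementary observation that rounding at threshold $1/2$ gives $y(\rho)\le 2\hat{y}(\rho)$ for every $\rho$, since $y(\rho)=1\le 2\hat{y}(\rho)$ when $\hat{y}(\rho)\ge 1/2$ and $y(\rho)=0\le 2\hat{y}(\rho)$ otherwise.

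First I would fix a man $m_i$ and identify the chain of rotations in which he appears. Let $w_{j_0}$ be his partner in $\mathbf{x}^m$, and list the rotations involving $m_i$ in the order they would be eliminated as $\rho_1,\ldots,\rho_k$, with $\rho_\ell=\rhobreak(m_i,w_{j_{\ell-1}})=\rhoget(m_i,w_{j_\ell})$, so that his successive partners are $w_{j_0},w_{j_1},\ldots,w_{j_k}$ and $w_{j_k}$ is his partner in $\mathbf{x}^w$. Constraints \ref{const:x-y1}--\ref{const:x-y3} then read $x_{ij_0}=1-y(\rho_1)$, $x_{ij_\ell}=y(\rho_\ell)-y(\rho_{\ell+1})$ for $1\le\ell\le k-1$, and $x_{ij_k}=y(\rho_k)$. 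Writing $d_\ell:=d(\rk(m_i,w_{j_\ell}))$ and substituting into $d^m_i=\sum_\ell x_{ij_\ell}d_\ell$, the telescoping rearrangement yields
\begin{equation*}
d^m_i \;=\; d_0 \;+\; \sum_{\ell=1}^{k}(d_\ell-d_{\ell-1})\,y(\rho_\ell).
\end{equation*}
Because $d$ is strictly increasing and the partners worsen along the chain, every coefficient $d_\ell-d_{\ell-1}$ is positive, and $d_0\ge 0$. The same derivation produces an analogous expression for every woman $w_j$; and since the derivation only uses constraints \ref{const:x-y1}--\ref{const:x-y3}, the very same formula holds with $\hat{y}$ in place of $y$ for the LP solution $(\hat{\mathbf{d}},\hat{\mathbf{x}},\hat{\mathbf{y}})$.

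Applying $y(\rho_\ell)\le 2\hat{y}(\rho_\ell)$ inside this formula, using that all coefficients $d_\ell-d_{\ell-1}$ are nonnegative, gives
\begin{equation*}
d^m_i \;\le\; d_0 \;+\; 2\sum_{\ell=1}^{k}(d_\ell-d_{\ell-1})\hat{y}(\rho_\ell) \;=\; 2\hat{d}^m_i - d_0 \;\le\; 2\hat{d}^m_i,
\end{equation*}
and the symmetric bound holds for each $d^w_j$, which establishes the componentwise inequality $\hat{\mathbf{d}}\ge\tfrac{1}{2}\mathbf{d}$.

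It remains to check that $(\mathbf{d},\mathbf{x},\mathbf{y})$ is indeed feasible for $\mathcal{P}$, i.e., that the rounded $\mathbf{y}$ encodes a closed subset of the rotation poset. If $\rho<\rho'$ then $\hat{y}(\rho')\le \hat{y}(\rho)$ by constraint~\ref{const:y} for $\widehat{\mathcal{P}}$; so $\hat{y}(\rho')\ge 1/2$ forces $\hat{y}(\rho)\ge 1/2$, and hence $y(\rho')=1$ implies $y(\rho)=1$, preserving \ref{const:y}. By the correspondence recalled before the lemma, $\mathbf{y}$ then determines a genuine stable marriage through \ref{const:x-y1}--\ref{const:x-y3}, so the triple produced by the rounding is feasible. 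I expect the main (small) obstacle to be setting up the per-agent telescoping cleanly; once the chain $\rho_1<\cdots<\rho_k$ and its corresponding partners are identified, the rest is a direct comparison of two affine forms that agree on the constant term.
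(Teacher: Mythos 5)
Your treatment of the men is correct and is essentially the paper's argument in disguise: your telescoped coefficients $d_\ell-d_{\ell-1}$ are (up to sign) the paper's rotation weights $\omega^m_i(\rho)$, and your inequality $y(\rho)\le 2\hat{y}(\rho)$ is the paper's observation that $\hat{y}(\rho)-\tfrac{1}{2}y(\rho)\ge 0$. The gap is the claim that ``the same derivation'' and ``the symmetric bound'' handle the women. It does not: eliminating rotations \emph{improves} each woman's match, so along a woman $w_j$'s chain the values $e_\ell=d(\rk(w_j,m_{i_\ell}))$ are strictly \emph{decreasing}, and in the affine form $d^w_j=e_0+\sum_{\ell}(e_\ell-e_{\ell-1})\,y(\rho_\ell)$ every coefficient $e_\ell-e_{\ell-1}$ is \emph{negative}. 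Applying $y(\rho)\le 2\hat{y}(\rho)$ to negative coefficients reverses the inequality, so your final step fails precisely where the lemma is least obvious: for women the danger comes not from the rotations rounded up to $1$ (those only help them) but from the rotations with $\hat{y}(\rho)<\tfrac12$ rounded \emph{down} to $0$, and the bound $y\le 2\hat{y}$ says nothing useful about those.

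The women's case needs a complementary argument, and this asymmetry is exactly where the paper does extra work. One clean fix in your framework: anchor the affine expression at the woman-optimal marriage, writing $d^w_j=e_k+\sum_\ell (e_{\ell-1}-e_\ell)\bigl(1-y(\rho_\ell)\bigr)$ with positive coefficients $e_{\ell-1}-e_\ell$ and constant term $e_k=d(w_j,\mathbf{x}^w)\ge 0$, and use the complementary rounding inequality $1-y(\rho)\le 2\bigl(1-\hat{y}(\rho)\bigr)$ (valid since $\hat{y}(\rho)<\tfrac12$ forces $y(\rho)=0$); this yields $d^w_j\le 2\hat{d}^w_j-e_k\le 2\hat{d}^w_j$. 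The paper's proof is the same idea phrased with rotation weights: it bounds $\hat{y}(\rho)-\tfrac12 y(\rho)\le\tfrac12$ and uses that the woman-weights sum to $d(w_j,\mathbf{x}^m)-d(w_j,\mathbf{x}^w)$, giving $\hat{d}^w_j-\tfrac12 d^w_j\ge\tfrac12 d(w_j,\mathbf{x}^w)\ge 0$. Note that nonnegativity of disutilities enters for women through $d(w_j,\mathbf{x}^w)\ge 0$, the disutility in the \emph{woman-optimal} marriage, not through $e_0$ as in your men's step. Your closedness check for the rounded $\mathbf{y}$ under constraint~\ref{const:y} is correct and matches the paper; only the women's half of the componentwise bound needs the repair above.
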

\begin{proof}
In order to establish the result stated in the lemma, we introduce the notion of man and woman \emph{weights} of a rotation. Given a rotation $\rho = (m_{i_0}, w_{i_0}),\ldots,(m_{i_{r-1}},w_{i_{r-1}})$ we define the $m_i$-weight of that rotation by:
\begin{equation*}
\omega_i^m(\rho) = \left\{\begin{array}{l}
d(\rk(m_{i_k},w_{i_k})) - d(\rk(m_{i_k},w_{i_{(k+1)\mbox{\tiny ~mod } r}})) \mbox{ if } i \in \{i_0,\ldots,i_{r-1}\} \mbox{ and } i=i_k\\
0 \mbox{ otherwise.} \end{array}\right.  
\end{equation*} 
Similarly, we define the $w_j$-weight of that rotation by:
\begin{equation*}
\omega_j^w(\rho) = \left\{\begin{array}{l}
d(\rk(w_{i_k},m_{i_k})) - d(\rk(w_{i_k},m_{i_{(k-1)\mbox{\tiny ~mod } r}})) \mbox{ if } j \in \{i_0,\ldots,i_{r-1}\} \mbox{ and } j=i_k\\
0 \mbox{ otherwise.} \end{array}\right.  
\end{equation*}  

Note that a man weight of a rotation will always be negative while a woman weight of a rotation will always be positive.  

Assume that $\rho$ is a rotation that is exposed in a stable marriage $\mathbf{x}$, and let $\mathbf{x}'$ be the stable marriage obtained from $\mathbf{x}$ by eliminating $\rho$. Then:
\begin{equation*}
d(m_i,\mathbf{x}') = d(m_i,\mathbf{x}) - \omega_i^m(\rho), \hspace{0.5cm} d(w_j,\mathbf{x}') = d(w_j,\mathbf{x}) - \omega_j^w(\rho).
\end{equation*}
Consequently, if $\mathbf{x}$ is the stable marriage obtained from the man-optimal stable marriage $\mathbf{x}^m$ by eliminating rotations $\rho_1, \ldots, \rho_t$, then:
\begin{equation*}
d(m_i,\mathbf{x}) = d(m_i,\mathbf{x}^m) - \sum_{k=1}^t \omega_i^m(\rho_k), \hspace{0.5cm} d(w_j,\mathbf{x}) = d(w_j,\mathbf{x}^m) - \sum_{k=1}^t \omega_j^w(\rho_k).
\end{equation*}
We now establish the result stated in the lemma. Let $(\mathbf{\hat{d}},\mathbf{\hat{x}},\mathbf{\hat{y}})$ denote a feasible solution of $\widehat{\mathcal{P}}$. The previous equations extend as follows for solutions of $\widehat{\mathcal{P}}$:
\begin{equation}
\hat{d}^m_i = d(m_i,\mathbf{x}^m) - \sum_{\rho \in P} \hat{y}(\rho)\omega_i^m(\rho), \hspace{0.5cm} \hat{d}^w_j = d(w_j,\mathbf{x}^m) - \sum_{\rho \in P} \hat{y}(\rho) \omega_j^w(\rho). \label{eq:hatd^m_i}
\end{equation}

Now consider the feasible solution $(\mathbf{d},\mathbf{x},\mathbf{y})$ of $\mathcal{P}$ defined by $y(\rho) = 1$ if $\hat{y}(\rho) \ge 0.5$, and $0$ otherwise. The feasibility of $(\mathbf{d},\mathbf{x},\mathbf{y})$ comes from the fact that $\{\rho : \hat{y}(\rho) \ge 0.5\}$ is a closed set of rotations. Indeed, note that constraints~\ref{const:y} ensures that $y(\rho') \le y(\rho)$ for all $\rho<\rho'$. We have:
\begin{align*}
\hat{d}^m_i - \frac{1}{2}d^m_i &= d(m_i,\mathbf{x}^m) - \sum_{\rho \in P}\hat{y}(\rho) \omega_i^m(\rho) -    \frac{1}{2} (d(m_i,\mathbf{x}^m) - \sum_{\rho \in P} y(\rho) \omega_i^m(\rho))\\
	    &=\frac{1}{2} d(m_i,\mathbf{x}^m) - \sum_{\rho\in P} (\hat{y}(\rho) - \frac{1}{2}y(\rho)) \omega_i^m(\rho)
	    \\&\ge \frac{1}{2} d(m_i,\mathbf{x}^m) \ge 0
\end{align*}
as $0 \le (\hat{y}(\rho) - \frac{1}{2}y(\rho))$ for all $\rho\in P$ and $\omega_i^m(\rho) \le 0$ for all $i\in\{1,\ldots,n\}$ and $\rho\in P$. Hence, $\hat{d}^m_i \ge \frac{1}{2}d^m_i$ for all $i\in\{1,\ldots,n\}$.

Similarly, for women we have:
\begin{align*}
\hat{d}^w_j-\frac{1}{2}d^w_j &=d(w_j,\mathbf{x}^m) - \sum_{\rho\in P} \hat{y}(\rho) \omega_j^w(\rho)-\frac{1}{2}(d(w_j,\mathbf{x}  ^m) - \sum_{\rho \in P} y(\rho) \omega_j^w(\rho))\\
	    &=\frac{1}{2} d(w_j,\mathbf{x}^m) - \sum_{\rho\in P} (\hat{y}(\rho) - \frac{1}{2}y(\rho)) \omega_j^w(\rho)\\
	    &\ge \frac{1}{2}(d(w_j,\mathbf{x}^m) - \sum_{\rho \in P} \omega_j^w(\rho))
\end{align*}
as $(\hat{y}(\rho) - \frac{1}{2}y(\rho)) \le 0.5$ for all $\rho\in P$ and $\omega_j^w(\rho) \ge 0$ for all $j\in\{1,\ldots,n\}$ and $\rho\in P$. 
Since eliminating all rotations from $\mathbf{x}^m$ leads to $\mathbf{x}^w$, we have that $\frac{1}{2}(d(w_j,\mathbf{x}^m) - \sum_{\rho \in P} \omega_j^w(\rho)) = \frac{1}{2} d(w_j,\mathbf{x}^w)$. Therefore, $\hat{d}^w_j-\frac{1}{2}d^w_j \geq 0$ and hence, $\hat{d}^w_j \ge \frac{1}{2}d^w_j$ for all $j\in\{1,\ldots,n\}$. 

By combining the inequalities obtained for men and women, we obtain that  $\mathbf{\hat{d}} \ge \frac{1}{2}\mathbf{d}$, which concludes the proof.
\end{proof}

We can now state the main result of this section:

\begin{thr}
{\sc Rounding Algorithm} is a polynomial time 2-approximation algorithm for the GGI stable marriage problem, and the bound is tight.
\end{thr}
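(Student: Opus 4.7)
The plan is to prove the theorem in three stages: establishing polynomial running time, verifying the $2$-approximation ratio via Lemma~\ref{lemhalf}, and exhibiting an instance (or family) witnessing tightness.

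Polynomial running time is almost immediate. The rotation poset has at most $n(n-1)/2$ rotations and can be generated in $O(n^2)$, so $\widehat{\mathcal{P}}$ involves only polynomially many variables and constraints. Using the GGI linearization of Ogryczak and Śliwiński cited earlier, $\widehat{\mathcal{P}}$ reduces to a standard LP of polynomial size, hence solvable in polynomial time. Steps~2 and~3 only require one pass through the rotations followed by the substitutions in equations (\ref{const:x-y1})--(\ref{const:x-y3}).

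For the approximation ratio, let $(\mathbf{d}^\star, \mathbf{x}^\star, \mathbf{y}^\star)$ be an optimal solution of $\mathcal{P}$ with value $OPT = \GGI_{\bm \lambda}(\mathbf{d}^\star)$, and let $(\mathbf{d}, \mathbf{x}, \mathbf{y})$ denote the output of {\sc Rounding Algorithm}. Since $\mathcal{P}$ is obtained from $\widehat{\mathcal{P}}$ by restricting the $y$-variables to $\{0,1\}$, the optimal LP value satisfies $\GGI_{\bm \lambda}(\hat{\mathbf{d}}) \le OPT$. Lemma~\ref{lemhalf} then provides $\hat{\mathbf{d}} \ge \tfrac{1}{2}\mathbf{d}$ componentwise. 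I invoke two elementary properties of $\GGI_{\bm \lambda}$. Positive homogeneity, $\GGI_{\bm \lambda}(c\mathbf{v}) = c\,\GGI_{\bm \lambda}(\mathbf{v})$ for $c \ge 0$, follows from $(c\mathbf{v})^\downarrow = c\,\mathbf{v}^\downarrow$. Monotonicity, namely $\mathbf{u} \ge \mathbf{v}$ (componentwise, on nonnegative vectors) implies $\GGI_{\bm \lambda}(\mathbf{u}) \ge \GGI_{\bm \lambda}(\mathbf{v})$, follows from the variational identity $u^\downarrow_k = \max_{|S|=k}\min_{i\in S} u_i$, which transfers componentwise ordering from $\mathbf{u},\mathbf{v}$ to $\mathbf{u}^\downarrow,\mathbf{v}^\downarrow$, combined with the nonnegativity of the $\lambda_i$'s. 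Chaining the two gives
\begin{equation*}
\GGI_{\bm \lambda}(\mathbf{d}) \;=\; 2\,\GGI_{\bm \lambda}(\tfrac{1}{2}\mathbf{d}) \;\le\; 2\,\GGI_{\bm \lambda}(\hat{\mathbf{d}}) \;\le\; 2\,OPT.
\end{equation*}

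For tightness, the plan is to build a family of instances in which $OPT = \Theta(K)$ for a scale $K$, while the LP relaxation admits a fractional optimum with $\hat{y}(\rho) \approx 1/2$ on the critical rotation(s), so that rounding produces a matching whose GGI value approaches $2\,OPT$. Guided by the proof of Lemma~\ref{lemhalf}, the equality cases there suggest I need (i) the men appearing in the critical rotations to hold their top-ranked partners in $\mathbf{x}^m$, so that $d(m_i,\mathbf{x}^m)=0$; (ii) the symmetric condition for women with respect to $\mathbf{x}^w$; and (iii) disutilities and weights tuned so that the GGI strictly prefers the fractional midpoint to both pure integer options, forcing the LP to pick $\hat{y}(\rho)=1/2$. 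The main obstacle is engineering a stable-marriage instance whose rotation poset and preference lists satisfy (i)--(iii) simultaneously; a two-rotation construction in which the two rotations are individually harmful but collectively neutralize each other on the top-ranked GGI components should suffice to push the ratio arbitrarily close to $2$ as $K$ grows.
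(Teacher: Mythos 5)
Your first two stages are sound and essentially reproduce the paper's own argument: polynomial time follows from the $O(n^2)$-size rotation poset and the cited linearization of the GGI for solving $\widehat{\mathcal{P}}$, and the ratio follows by chaining $\GGI_{{\bm \lambda}}(\mathbf{d}) = 2\,\GGI_{{\bm \lambda}}(\tfrac{1}{2}\mathbf{d}) \le 2\,\GGI_{{\bm \lambda}}(\hat{\mathbf{d}}) \le 2\,OPT$, using Lemma~\ref{lemhalf}, monotonicity and positive homogeneity of the GGI, and the fact that $\widehat{\mathcal{P}}$ relaxes $\mathcal{P}$. Your justification of monotonicity via the identity $u^\downarrow_k = \max_{|S|=k}\min_{i\in S} u_i$ is a fine elementary substitute for the paper's citation.

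The tightness stage, however, is a genuine gap: the theorem asserts the bound is tight, but you exhibit no instance --- your text is a design brief (``should suffice'') rather than a proof. What must actually be verified, and what the paper does, is threefold: give concrete preference lists; show that the LP optimum of $\widehat{\mathcal{P}}$ is attained at fractional values that the threshold rule rounds badly (crucially, $\hat{y}(\rho)=0.5$ rounds \emph{up} to $1$ under the rule $\hat{y}\ge 0.5$); and compute both the rounded value and $OPT$. The paper uses a $3\times 3$ instance with cyclic preferences yielding two comparable rotations $\rho_1 < \rho_2$, disutilities $d(1)=0$, $d(2)=1+\epsilon$, $d(3)=2$, and weights ${\bm \lambda}=(a,b,c,0,0,0)$: by symmetry all men share one disutility and all women another, so minimizing the GGI over the relaxation reduces to minimizing $\max\{\hat{d}^m_1,\hat{d}^w_1\}$ subject to $\hat{y}(\rho_1)\ge \hat{y}(\rho_2)$, and simple calculations show the \emph{unique} LP optimum is $\hat{y}(\rho_1)=\hat{y}(\rho_2)=0.5$; rounding then returns the woman-optimal marriage of value $2(a+b+c)$, while the compromise marriage achieves $(1+\epsilon)(a+b+c)$, and letting $\epsilon\to 0$ gives ratio $2$. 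Two points your plan glosses over are essential there: uniqueness of the fractional optimum must be argued (otherwise the solver could return an integral optimum and rounding would be harmless on the instance), and the precedence constraint between the rotations must be respected in that argument. Your ``scale $K$'' asymptotics are unnecessary --- a single fixed instance with $\epsilon \to 0$ suffices --- but without an explicit construction and these verifications the tightness half of the theorem remains unproven.
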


\begin{proof}
We first recall that all steps of {\sc Rounding Algorithm} can be performed in polynomial time. Furthermore, by Lemma~\ref{lemhalf}, the feasible solution $(\mathbf{{d}},\mathbf{{x}},\mathbf{{y}})$ generated by {\sc Rounding Algorithm} is such that $\mathbf{\hat{d}} \ge \frac{1}{2}\mathbf{d}$, where $(\mathbf{\hat{d}},\mathbf{\hat{x}},\mathbf{\hat{y}})$ is an optimal solution to $\widehat{\mathcal{P}}$. Consequently:
\[
\GGI_{{\bm \lambda}}(\mathbf{\hat{d}}) \ge \frac{1}{2} \GGI_{{\bm \lambda}}(\mathbf{d})
\]
because $\GGI_{{\bm \lambda}}(\mathbf{{d}}) \le \GGI_{{\bm \lambda}}(\mathbf{{d'}})$ for $\mathbf{{d}} \le \mathbf{{d'}}$ (see e.g. \cite{fodor1995characterization}) and $\GGI_{{\bm \lambda}}(\alpha \mathbf{d})=\alpha\GGI_{{\bm \lambda}}(\mathbf{d})$ for $\alpha > 0$.

For the tightness of the bound, consider the following instance of the stable marriage problem:
\begin{center}
$m_1 : 1 \succ^m_1 2 \succ^m_1 3$ \hspace{0.5cm} $w_1 : 2 \succ^w_1 3 \succ^w_1 1$\\ 
$m_2 : 2 \succ^m_2 3 \succ^m_2 1$ \hspace{0.5cm} $w_2 : 3 \succ^w_2 1 \succ^w_2 2$\\
$m_3 : 3 \succ^m_3 1 \succ^m_3 2$ \hspace{0.5cm} $w_3 : 1 \succ^w_3 2 \succ^w_3 3$
\end{center}
There are two rotations $\rho_1 =(1,1),(2,2),(3,3)$ and $\rho_2=(1,2),(2,3),(3,1)$, with $\rho_1 < \rho_2$, which yield three stable marriages: 
\begin{itemize}
\item the man-optimal stable marriage $\mathbf{x}^m$ in which each man is matched with his first choice, and which corresponds to eliminating no rotation,
\item the woman-optimal stable marriage $\mathbf{x}^w$ in which each woman is matched with her first choice, and which corresponds to eliminating both rotations,
\item a ``compromise'' stable marriage $\mathbf{x}^c$ in which each agent is matched with his/her second choice, and which corresponds to eliminating only $\rho_1$.
\end{itemize}
We use the disutility function $d$ defined by $d(1)=0$, $d(2)=1+\epsilon$ and $d(3)=2$ (with $\epsilon>0$) and the following GGI weights: ${\bm \lambda} = (a,b,c,0,0,0)$ with $a \ge b \ge c > 0$. The disutility vectors of the three stable marriages are then $\mathbf{d}(\mathbf{x}^m) = (0,0,0,2,2,2)$, $\mathbf{d}(\mathbf{x}^w) = (2,2,2,0,0,0)$ and $\mathbf{d}(\mathbf{x}^c) = (1+\epsilon,1+\epsilon,1+\epsilon,1+\epsilon,1+\epsilon,1+\epsilon)$.


By using Equation~\ref{eq:hatd^m_i} from the proof of Lemma~\ref{lemhalf}, the value of $\hat{d}^m_i$ for each man $m_i$ and the value of $\hat{d}^w_j$ for each woman $w_j$ are written as follows in terms of $\hat{y}(\rho_1)$ and $\hat{y}(\rho_2)$:
\begin{align*}
\hat{d}^m_i &= 0 + \hat{y}(\rho_1) (1 + \epsilon) + \hat{y}(\rho_2) (1 - \epsilon) \quad \forall i\in\{1,2,3\}\\
\hat{d}^w_j &= 2 - \hat{y}(\rho_1) (1 - \epsilon) - \hat{y}(\rho_2) (1 + \epsilon) \quad \forall j\in\{1,2,3\}
\end{align*}
where $\hat{y}(\rho_1) \ge \hat{y}(\rho_2)$. We see that the three men share the same disutility value, as well as the three women. The GGI value of a feasible solution to $\widehat{\mathcal{P}}$ is thus completely determined by the common disutility of the men or the common disutility of the women because only the three least satisfied agents are taken into account in ${\bm \lambda}$. Consequently, an optimal solution to $\widehat{\mathcal{P}}$ minimizes $\max\{\hat{d}_1^m,\hat{d}_1^w\}$ for $\hat{y}(\rho_1) \ge \hat{y}(\rho_2)$. Simple calculations make it possible to conclude that the only optimal solution to $\widehat{\mathcal{P}}$ is characterized by $\hat{y}(\rho_1)=\hat{y}(\rho_2)=0.5$. 

For this instance, {\sc Rounding Algorithm} returns therefore the woman-optimal stable marriage which has the ordered disutility vector $(2,2,2,0,0,0)$ and a GGI value of $2(a + b + c)$. However, an optimal stable marriage is the ``compromise'' stable marriage, which has the ordered disutility vector $(1+\epsilon,1+\epsilon,1+\epsilon,1+\epsilon,1+\epsilon,1+\epsilon)$ and a GGI value of $(1 + \epsilon)(a + b + c)$. By taking the limit for $\epsilon$ going to 0, we obtain the tightness of the bound.  
\end{proof}

\begin{remark}
Note that the approach taken in {\sc Rounding Algorithm} is valid for any aggregation criterion $F$ on dissatisfactions of agents for which the following condition holds:
\begin{align*}
F(\frac{1}{2}\mathbf{d}(\mathbf{x})) &= \frac{1}{2}F(\mathbf{d}(\mathbf{x}))\\
F(\mathbf{d}(\mathbf{x}) + \mathbf{r}) &\ge F(\mathbf{d}(\mathbf{x}))
\end{align*}
where $\mathbf{r}$ is any non-negative vector.
\end{remark}

\begin{remark}
A general approximation result for the optimization of a generalized Gini index in muliobjective optimization problems has been proposed by \cite{kasperski2015combinatorial}. For the GGISM problem, it amounts to compute an optimal stable marriage according to the sum of disutilities of pairs $(m_i,w_j)$, where the disutility of a pair $(m_i,w_j)$ is defined by $\lambda_1 \max\{d(\rk(m_i,w_j)),d(\rk(w_j,m_i))\} + \lambda_2 \min\{d(\rk(m_i,w_j)),d(\rk(w_j,m_i))\}$. This can be performed in polynomial time by linear programming. The returned solution is a $N\lambda_1$-approximation, provided $\sum_{i=1}^N \lambda_i = 1$. To obtain a better guarantee than 2, one should have $\lambda_1 < 2/N$. On the contrary, by taking advantage of the specific structure of the stable marriage problem, our approach yields a 2-approximation \emph{whatever weights are used}.
\end{remark}

\section{The GGI Stable Marriage Problem with a Bounded Number of Non-zero Weights}
\label{sec:param}

In this section, we provide an algorithm whose complexity is $O(2^Kn^{2K+4})$ where $K = \max\{i : \lambda_i > 0\}$. Hence, the complexity is polynomial time if $K$ is assumed to be a constant, where $K$ is the number of non-zero weights in the GGI operator. In the parametrized complexity terminology \citep{niedermeier2006invitation}, this means that the GGI stable marriage problem belongs to class XP for parameter $K$.

We adopt a brute-force approach to solve the problem in $O(2^Kn^{2K+4})$. Let
\begin{multline*}
\mathbf{t}(\mathbf{x})=((d(m_1,\mathbf{x}),m_1,\mu_{\mathbf{x}}(m_1)),\ldots,(d(m_n,\mathbf{x}), m_n,\mu_{\mathbf{x}}(m_n)),\\ (d(w_1,\mathbf{x}),w_1,\mu_{\mathbf{x}}(w_1)),\ldots,(d(w_n,\mathbf{x}),w_n,\mu_{\mathbf{x}}(w_n))
\end{multline*}
denote the vectors of triples $(d(a_i,\mathbf{x}),a_i,\mu_{\mathbf{x}}(a_i))$ induced by stable marriage $\mathbf{x}$, where $d(a_i,\mathbf{x})$ is the dissatisfaction of agent $a_i$ when matched with $\mu_{\mathbf{x}}(a_i)$. We denote by $T^\downarrow(\mathbf{x})$ the set of vectors $\mathbf{t}^\downarrow(\mathbf{x})$ that can be obtained from  $\mathbf{t}(\mathbf{x})$ by sorting the triples in decreasing order of dissatisfactions. 
The projection of a vector $\mathbf{t}^\downarrow(\mathbf{x}) \in  T^\downarrow(\mathbf{x})$ on the $K$ first components is denoted by $\mathbf{t}^\downarrow_K(\mathbf{x})$. We denote by  $T^\downarrow_K(\mathbf{x})$ the set $\{\mathbf{t}^\downarrow_K(\mathbf{x}) : \mathbf{t}^\downarrow(\mathbf{x}) \in  T^\downarrow(\mathbf{x})\}$.

For instance, assume that $\mathbf{t}(\mathbf{x})$ $=$ $((1,m_1,w_2),(2,m_2,w_1),(2,w_1,m_2),(1,w_2,m_1))$. Then the set $T^\downarrow(\mathbf{x})$ is
     \begin{multline*} 
     \{((2,m_2,w_1),(2,w_1,m_2),(1,m_1,w_2),(1,w_2,m_1)),\\ 
     ((2,w_1,m_2),(2,m_2,w_1),(1,m_1,w_2),(1,w_2,m_1)),\\
     ((2,m_2,w_1),(2,w_1,m_2),(1,w_2,m_1),(1,m_1,w_2))\\
     ((2,w_1,m_2),(2,m_2,w_1),(1,w_2,m_1),(1,m_1,w_2))\}
     \end{multline*}
and the set $T^\downarrow_2(\mathbf{x})$ is $\{((2,m_2,w_1),(2,w_1,m_2)), ((2,w_1,m_2),(2,m_2,w_1))\}$.
     
The idea is to enumerate all vectors in $T_K^\downarrow  = \cup_{\mathbf{x}\in\mathcal{X}} T^\downarrow_K(\mathbf{x})$ without redundancy. The polynomiality of the approach follows from the fact that $|T_K^\downarrow| \leq (2n^2)^{K}$ because the number of distinct triples is upper bounded by $2n^2$. 
Note that we have:
$$
 \min_{\mathbf{x}\in\mathcal{X}} \GGI_{{\bm \lambda}}(\mathbf{d}(\mathbf{x})) =
  \min_{\mathbf{t}\in T_K^\downarrow} \GGI_{{\bm \lambda}}(\mathbf{t}) 
$$
because $\lambda_i=0$ for all $i>K$, where, by abuse of notation, we denote by $\GGI_{{\bm \lambda}}(\mathbf{t})$ the value of the GGI operator applied to the vector of dissatisfactions obtained from $\mathbf{t}$\footnote{Note that vector $\mathbf{t}\in T_K^\downarrow$ is incomplete as it only has $K$ components, but it is sufficient to apply the GGI operator because $\lambda_i=0$ for all $i>K$.}. Hence identifying an optimal GGI stable marriage will be performed by finding a vector $\mathbf{t}\in T_K^\downarrow$ minimizing the GGI operator and computing a corresponding stable marriage.

\begin{ex}
Coming back to the instance of Example \ref{exGILBERT}, assume that $K=2$ and that the disutility function is defined by $d(i)= i$. Then, our enumeration algorithm would produce the following set $\mathcal{T}^\downarrow_2$:

\begin{multline*} 
\{((10,w_{10},m_9),(7,w_{4},m_6)),((10,w_{10},m_9),(7,w_{5},m_7)),\\
((10,w_{10},m_9),(7,w_{6},m_5)),((10,w_{10},m_9),(7,w_{7},m_4)),\\
((10,w_{10},m_9),(5,m_{4},w_6)),((10,w_{10},m_9),(5,m_{5},w_7)),\\
((10,w_{10},m_9),(5,m_{6},w_5)),((10,w_{10},m_9),(5,m_{7},w_4)),\\
((9,w_{10},m_8),(5,m_{4},w_6)),((9,w_{10},m_8),(5,m_{5},w_7)),\\
((9,w_{10},m_8),(5,m_{6},w_5)),((9,w_{10},m_8),(5,m_{7},w_4)),\\
((9,w_{10},m_8),(5,m_{8},w_{10})),((9,w_{10},m_8),(5,m_{9},w_9)),\\
((9,w_{10},m_8),(5,m_{10},w_8))\}
\end{multline*}
For this instance, the optimal GGI value is therefore necessarily $9 \lambda_1 + 5 \lambda_2$. Note that, in most cases, the optimal GGI value depends on ${\bm \lambda}$ (it is not the case here because $(9,5)$ dominates componentwise all vectors of dissatisfactions obtained from $\mathcal{T}^\downarrow_2$).
\end{ex}

We now describe our enumeration algorithm. Algorithm \ref{main} builds set $T_K^\downarrow$ by induction using the following formula:
\begin{align*}
   T_0^\downarrow &= \{()\}\\
   T_k^\downarrow &= \{\mathbf{v} \circ t : \mathbf{v} \in T_{k-1}^\downarrow \text{ and } t \in T(\mathbf{v})\} \text{ where } T(\mathbf{v}) = \{ t_k : \mathbf{t} \in T_N^\downarrow \text{ s.t. } (t_1,\ldots, t_{k-1}) = \mathbf{v}\}
\end{align*}
The aim of Algorithm \ref{NextTriples} is to compute $T(\mathbf{v})$, i.e., the set of possible triples for the $k^{th}$ component of a vector in $T_k^\downarrow$ starting by the $(k-1)$-vector $\mathbf{v}$. The idea is to impose restrictions on the considered stable marriages so that the least satisfied agents as well as their matches correspond to the ones in $\mathbf{v}$. For this purpose, we impose mandatory rotations (set $\IN$) and forbidden rotations (set $\OUT$). 
Note that, each time a rotation is made mandatory (resp. forbidden), the set of its ancestors (resp. descendants), denoted by $\mathtt{Anc}(\rho)$ (resp. $\mathtt{Desc}(\rho)$), are also made mandatory (resp. forbidden) so that $\mathtt{IN}_{\mathbf{v}}$ (resp. $P\setminus \mathtt{OUT}_\mathbf{v}$) remains a closed set of rotations. 
For each triple $(d,a,a')$ belonging to $\mathbf{v}$, we ensure that agent $a$ is matched with agent $a'$ by making rotation $\rhoget(a,a')$ mandatory and $\rhobreak(a,a')$ forbidden (Lines 3--5). Additionally, to ensure that the $k$ least satisfied agents are indeed those involved in $\mathbf{v}$, we put a threshold on the dissatisfactions of the agents in $\mathcal{A}_{\overline{\mathbf{v}}} = \mathcal{M}\cup\mathcal{W}\setminus \{a:(d,a,a') \in \mathbf{v}\}$. Note that the set $\mathcal{A}_{\overline{\mathbf{v}}}$ is updated in Line 3. Let $d_{\min}(\mathbf{v})$ denote the dissatisfaction of the last triple in $\mathbf{v}$ (i.e., the lowest level of dissatisfaction in $\mathbf{v}$). The dissatisfactions of the agents in $\mathcal{A}_{\overline{\mathbf{v}}}$ should not be strictly greater than $d_{\min}(\mathbf{v})$. This condition is imposed by using again sets $\IN$ and $\OUT$. More precisely, given a rotation $\rho = (m_{i_0},w_{i_0}),\ldots,(m_{i_{r-1}},w_{i_{r-1}})$, we define $\dmaxw(\rho)=\max_{k=0,\ldots,r-1} d(w_{i_k},m_{i_k})$ the highest dissatisfaction of a woman involved in $\rho$ before $\rho$ is eliminated, and $\dmaxm(\rho)=\max_{k=0,\ldots,r-1} d(m_{i_k},w_{i_{k+1}})$ the highest dissatisfaction of a man involved in $\rho$ after $\rho$ is eliminated. To make sure that the agents in $\mathcal{A}_{\overline{\mathbf{v}}}$ have a dissatisfaction lower than or equal to $d_{\min}(\mathbf{v})$, we make mandatory (resp. forbidden) any rotation $\rho\in P\backslash \mathtt{OUT}_\mathbf{v}$ (resp. $P \backslash \mathtt{IN}_{\mathbf{v}}$) such that  $\dmaxw(\rho) > d_{\min}(\mathbf{v})$ (resp. $\dmaxm(\rho) > d_{\min}(\mathbf{v})$) (Lines 6--7, resp. Lines 8--9). The enumeration of the triples in $T(\mathbf{v})$ is performed by branching on the gender (man or woman) of the agent that will realize the $k^{th}$ highest dissatisfaction. We denote by $T_W(\mathbf{v})$ (resp. $T_M(\mathbf{v})$) the set of triples $(d,a,a') \in T(\mathbf{v})$ where $a \in \mathcal{W}$ (resp. $a \in \mathcal{M}$). We have of course $T_W(\mathbf{v}) \cup T_M(\mathbf{v}) = T(\mathbf{v})$. Algorithm \ref{NextWomen} enumerates the triples in $T_W(\mathbf{v})$ while Algorithm \ref{NextMen} enumerates the triples in $T_M(\mathbf{v})$ (Line 10 of Algorithm \ref{NextTriples}). The validity of the approach follows from the validity of Algorithms~\ref{NextWomen} and~\ref{NextMen}.

\paragraph{Validity of the approach.} The operations of Algorithms \ref{NextWomen} and \ref{NextMen} are similar. They proceed in the spirit of the algorithm proposed by \cite{gusfield1987three} for determining a minmax stable marriage. Let $\mathbf{x}_R$ denote the stable marriage corresponding to a set $R$ of rotations. Note that we have built sets $\mathtt{IN}_{\mathbf{v}}$ and $\mathtt{OUT}_\mathbf{v}$ such that if $R\cap\mathtt{IN}_{\mathbf{v}} = \mathtt{IN}_{\mathbf{v}}$ and $R\cap\mathtt{OUT}_\mathbf{v} = \emptyset$ then $\mathbf{v} \in T^\downarrow_{k-1}(\mathbf{x}_R)$. Furthermore, the special case $\mathbf{x}_{\mathtt{IN}_{\mathbf{v}}}$ (resp. $\mathbf{x}_{P\setminus \mathtt{OUT}_\mathbf{v}}$) is the stable marriage compatible with $\mathtt{IN}_{\mathbf{v}}$ and $\mathtt{OUT}_\mathbf{v}$ that satisfy most the men (resp. women) as it takes as few (resp. much) rotations as allowed by sets $\IN$ and $\OUT$. We only explain the operation of Algorithm \ref{NextWomen}, because the operation of Algorithm \ref{NextMen} is symmetric.

The aim of Algorithm \ref{NextWomen} is to enumerate all triples $(d,a,a')$ in $T_W(\mathbf{v})$.  Notably, we will enumerate these triples by nonincreasing values of $d$ by exploring carefully the set of stable marriages compatible with sets $\IN$ and $\OUT$. More precisely, at each iteration $i$ of the algorithm (loop {\tt while} in Line 5) we will consider a stable marriage $\mathbf{x}_i$ compatible with sets $\IN$ and $\OUT$ such that all women are always better off in $\mathbf{x}_{i}$ than in $\mathbf{x}_{i-1}$ for $i\neq 0$ (with at least one woman strictly better off). At each iteration, the new triples are found by looking at set $W_i$ that includes all women in $\mathcal{A}_{\overline{\mathbf{v}}}$ whose dissatisfaction can be ranked in $k^{th}$ position in $\mathbf{x}_{i}$, i.e., whose dissatisfaction is equal to $d^\downarrow_k(\mathbf{x}_{i})$ (Lines 3 and 13)\footnote{We recall that $d^\downarrow_k(\mathbf{x})$ denotes the $k^{th}$ component of vector $\mathbf{d}(\mathbf{x})$ when sorted by nonincreasing values.}.  

Obviously, for the women, the worst stable marriage compatible with $\IN$ and $\OUT$ is  $\mathbf{x}_{\IN}$. If no woman can be ranked in $k^{th}$ position w.r.t. stable marriage $\mathbf{x}_{\IN}$, then no woman can be ranked in $k^{th}$ position for \emph{any} stable marriage compatible with $\mathtt{IN}_{\mathbf{v}}$. Indeed, eliminating additional rotations would only increase the dissatisfactions of men and decrease the dissatisfactions of women. Otherwise the recurrence is initialized with $\mathbf{x}_0 = \mathbf{x}_{\IN}$ and stable marriage $\mathbf{x}_{i+1}$ is obtained from $\mathbf{x}_{i}$ by eliminating rotation $\rhobreak(m,w)$ (and all required ancestors) for all woman $w$ in $W_i$ so that their dissatisfactions are strictly decreased (Line 10). Loop {\tt while} stops if one of the following conditions occurs:
\begin{itemize}
\item if $W_i = \emptyset$, it means that only men can be ranked in $k^{th}$ position in $\mathbf{x}_i$; as eliminating rotations will only improve the situation of women and deteriorate the situation of men, we can safely conclude that all triples in $T_W(\mathbf{v})$ have been enumerated;
\item if at least  one rotation $\rhobreak(m,w)$ does not exist or is forbidden (i.e., $(m,w) \in \mathbf{x}_{P\backslash \OUT}$); indeed, in this case, we can conclude that it is not possible to find a triple in $T_W(\mathbf{v})$ with a dissatisfaction strictly less than the current value $d^\downarrow_k(\mathbf{x}_{i})$ (the boolean Flag is then set to True in Line 9).
\end{itemize}

\paragraph{Complexity analysis and proof of termination.} In Algorithm~\ref{NextWomen}, at every step $i$ of the {\tt while} loop, all agents in $W_i$ share the same dissatisfaction level $d^\downarrow_k(\mathbf{x}_{i})$. Furthermore, for all $i \neq 0$, we have that $d^\downarrow_k(\mathbf{x}_{i}) < d^\downarrow_k(\mathbf{x}_{i-1})$. As there are only $n$ dissatisfaction levels (corresponding to the $n$ possible ranks), the {\tt while} loop necessarily terminates in $O(n)$ iterations. The nested {\tt for} loop also terminates in $O(n)$ iterations because there can be at most $n$ women in $W_i$. All instructions inside the {\tt for} loop are in $O(1)$, except the instruction in Line 10 which is in $O(n^2)$ (the number of rotations is upper bounded by $n(n-1)/2$). Overall, Algorithm~\ref{NextWomen} is in $O(n^4)$. The analysis of Algorithm~\ref{NextMen} is similar. In Algorithm~\ref{NextTriples}, Lines 4 and 5 are in $O(n^2)$, hence the {\tt for} loop in Line 2 is in $O((k-1)n^2)$, therefore in $O(n^3)$ as $k \le 2n$. Lines 6--9 are in $O(n^4)$. Since we have shown that both calls in Line 10 are in $O(n^4)$, the overall complexity of Algorithm~\ref{NextTriples} is $O(n^4)$. Finally, the complexity of the three nested {\tt for} loops in Algorithm~\ref{main} is $O(\sum_{k=1}^K (2n^2)^{k-1} (n^4+2n^2))$ because:\\[1ex]
-- the cardinality of set $T_{(k-1)}^\downarrow$ in Line 4 is upper bounded by $(2n^2)^{k-1}$ (there are at most $2n^2$ triples, and $k-1$ components per vector of triples in $T_{(k-1)}^\downarrow$);\\
-- Line 5 is in $O(n^4)$; \\
-- Lines 6--7 are in $O(2n^2)$.\\[1ex]
Overall, the complexity of Algorithm~\ref{main} thus is $O(2^Kn^{2K+4})$.
 
\paragraph{Final remarks.} At the end of Algorithm \ref{main}, one obtains a set  
$T_K^\downarrow$ of vectors of triples. Within this set, one can choose a vector $\mathbf{v}^*$ which realizes:
$$
\min_{\mathbf{v}\in T_K^\downarrow} \GGI_{{\bm \lambda}}(\mathbf{v}) = \min_{\mathbf{x}\in\mathcal{X}} \GGI_{{\bm \lambda}}(\mathbf{d}(\mathbf{x})) .
$$
Given this vector $\mathbf{v}^*$, any stable marriage $\mathbf{x}^*$ such that $\mathbf{v}^* \in T_K^\downarrow(\mathbf{x^*})$ verifies
$$
\GGI_{{\bm \lambda}}(\mathbf{d}(\mathbf{x}^*))= \min_{\mathbf{x}\in\mathcal{X}} \GGI_{{\bm \lambda}}(\mathbf{d}(\mathbf{x})) .
$$

Given $\mathbf{v}^*$, it is easy to compute a stable marriage $\mathbf{x}^*$ such that $\mathbf{v}^* \in T_K^\downarrow(\mathbf{x}^*)$. In particular, $\mathbf{x}_{\mathtt{IN}_{\mathbf{v}^*}}$ (resp. $\mathbf{x}_{P \setminus \mathtt{OUT}_{\mathbf{v}^*}}$) is a best possible stable marriage for men (resp. women) where sets $\mathtt{IN}_{\mathbf{v}^*}$ and $\mathtt{OUT}_{\mathbf{v}^*}$ are generated in the same fashion as in Algorithm \ref{NextTriples} (Lines 1--9).  


\begin{algorithm}[]
\DontPrintSemicolon
\SetKwData{Left}{left}\SetKwData{This}{this}\SetKwData{Up}{up}
\SetKwFunction{Union}{Union}\SetKwFunction{FindCompress}{FindCompress}
\SetKwInOut{Input}{input}\SetKwInOut{Output}{output}
\Input{the GGISM instance and the value of $K$}
\Output{$T_K^\downarrow$}
 $T_0^\downarrow \leftarrow \{()\}$\\
 \For{$k = 1, \ldots, K$}{
 $T_k^\downarrow\leftarrow \emptyset$\\
      \For{$\mathbf{v} \in T_{k-1}^\downarrow$}{
      	$T\leftarrow \mathtt{NextTriples}(\mathbf{v},k)$\\
      	\For{$t \in T$}{
      		$T_k^\downarrow \leftarrow T_k^\downarrow \cup\{\mathbf{v}\circ t \}$
	}      
      }
  }
  \Return $T_K^\downarrow$
\caption{$\mathtt{Enumerate}$}
\label{main}
\end{algorithm}
\begin{algorithm}[]
\DontPrintSemicolon
\SetKwData{Left}{left}\SetKwData{This}{this}\SetKwData{Up}{up}
\SetKwFunction{Union}{Union}\SetKwFunction{FindCompress}{FindCompress}
\SetKwInOut{Input}{input}\SetKwInOut{Output}{output}
\Input{vector $\mathbf{v}$ of imposed triples, index $k$ of the next triple}
\Output{set $T$ of possible next triples}
 $\mathtt{IN}_{\mathbf{v}}\leftarrow \emptyset$; $\mathtt{OUT}_\mathbf{v}\leftarrow \emptyset$; $\mathcal{A}_{\overline{\mathbf{v}}}\leftarrow \mathcal{M} \cup \mathcal{W}$\\
 \For{$i = 1,\ldots,k-1$}{
 $(a,a',d) = v_i$, $\mathcal{A}_{\overline{\mathbf{v}}} \leftarrow \mathcal{A}_{\overline{\mathbf{v}}} \setminus \{ a \}$ \\
 $\mathtt{IN}_{\mathbf{v}} \leftarrow \mathtt{IN}_{\mathbf{v}}\cup \{ \rhoget(a,a')\}\cup \mathtt{Anc}(\rhoget(a,a')) $\\
 $\mathtt{OUT}_\mathbf{v} \leftarrow \mathtt{OUT}_\mathbf{v}\cup \{ \rhobreak(a,a')\}\cup \mathtt{Desc}(\rhobreak(a,a')) $}
      \For{$\rho \in P\backslash \mathtt{OUT}_\mathbf{v} \text{ s.t. } \dmaxw(\rho) > d_{\min}(v)$}{
      	$\mathtt{IN}_{\mathbf{v}} \leftarrow \mathtt{IN}_{\mathbf{v}}\cup \{\rho\}\cup \mathtt{Anc}(\rho)$ 	
	}
	\For{$\rho \in P \backslash \mathtt{IN}_{\mathbf{v}} \text{ s.t. } \dmaxm(\rho) > d_{\min}(v)$}{
      	$\mathtt{OUT}_\mathbf{v} \leftarrow \mathtt{OUT}_\mathbf{v}\cup \{\rho\}\cup \mathtt{Desc}(\rho)$ 	
	}      
\Return $\mathtt{NextWomen}(\mathtt{IN}_{\mathbf{v}},\mathtt{OUT}_\mathbf{v},k,\mathcal{A}_{\overline{\mathbf{v}}})$ $\cup$ $\mathtt{NextMen}(\IN,\mathtt{OUT}_\mathbf{v},k,\mathcal{A}_{\overline{\mathbf{v}}})$
\caption{$\mathtt{NextTriples}$}
\label{NextTriples}
\end{algorithm}
\begin{algorithm}[]
\DontPrintSemicolon
\SetKwData{Left}{left}\SetKwData{This}{this}\SetKwData{Up}{up}
\SetKwFunction{Union}{Union}\SetKwFunction{FindCompress}{FindCompress}
\SetKwInOut{Input}{input}\SetKwInOut{Output}{output}
\Input{set $\mathtt{IN}_{\mathbf{v}}$ and $\mathtt{OUT}_\mathbf{v}$ of mandatory and forbidden rotations, index $k$ of the next triple, set $\mathcal{A}_{\overline{\mathbf{v}}}$}
\Output{$T_W(\mathbf{v})$}
 Compute $\mathbf{x}_{\mathtt{IN}_{\mathbf{v}}}$ and $\mathbf{x}_{P\backslash \mathtt{OUT}_\mathbf{v}}$\\
$ T \leftarrow \emptyset$; $R \leftarrow \IN$; $i \leftarrow 0$; $\mathbf{x}_i \leftarrow \mathbf{x}_R$ \\ 
 $W_i \leftarrow \{w\in\mathcal{A}_{\overline{\mathbf{v}}}\cap \mathcal{W}: d(w,\mathbf{x}_{i}) = d^\downarrow_k(\mathbf{x}_{i})\}$\\
Flag $\leftarrow$ False\\ 
 \While{$W_i \neq \emptyset$}{
 \For{$w\in W_i$}{ 
 let $m$ be the match of $w$ in $\mathbf{x}_i$\\
 $T\leftarrow T\cup \{(d(w,m),w,m)\}$\\ 
 \lIf{$(m,w)\in\mathbf{x}_{P\backslash \mathtt{OUT}_\mathbf{v}}$}{Flag $\leftarrow$ True}
 \lElse{$R\leftarrow R \cup \rhobreak(m,w)\cup \mathtt{Anc}(\rhobreak(m,w))$}
 }
 \lIf{Flag}{\Return $T$}
 $i \leftarrow i+1$; $\mathbf{x}_i \leftarrow \mathbf{x}_R$\\
 $W_i \leftarrow \{w\in\mathcal{A}_{\overline{\mathbf{v}}}\cap \mathcal{W}: d(w,\mathbf{x}_{i}) = d^\downarrow_k(\mathbf{x}_{i})\}$
 }
\Return $T$
\caption{NextWomen}
\label{NextWomen}
\end{algorithm}
\begin{algorithm}
\DontPrintSemicolon
\SetKwData{Left}{left}\SetKwData{This}{this}\SetKwData{Up}{up}
\SetKwFunction{Union}{Union}\SetKwFunction{FindCompress}{FindCompress}
\SetKwInOut{Input}{input}\SetKwInOut{Output}{output}
\Input{set $\mathtt{IN}_{\mathbf{v}}$ and $\mathtt{OUT}_\mathbf{v}$ of mandatory and forbidden rotations, index $k$ of the next triple, set $\mathcal{A}_{\overline{\mathbf{v}}}$}
\Output{$T_M(\mathbf{v})$} 
 Compute $\mathbf{x}_{\mathtt{IN}_{\mathbf{v}}}$ and $\mathbf{x}_{P\backslash \mathtt{OUT}_\mathbf{v}}$ \\
 $ T \leftarrow \emptyset$; $R \leftarrow P \setminus \OUT$; $i \leftarrow 0$; $\mathbf{x}_i \leftarrow \mathbf{x}_R$ \\ 
 $M_i\leftarrow \{m\in\mathcal{A}_{\overline{\mathbf{v}}}\cap \mathcal{M}: d(m,\mathbf{x}_{i}) = d^\downarrow_k(\mathbf{x}_{i})\}$\\
Flag $\leftarrow$ False\\ 
 \While{$M_i\neq \emptyset$}{
 \For{$m\in M_i$}{ 
 let $w$ be the match of $m$ in $\mathbf{x}_i$\\
 $T \leftarrow T \cup \{(d(m,w),m,w)\}$\\ 
 \lIf{$(m,w) \in \mathbf{x}_{\mathtt{IN}_{\mathbf{v}}}$}{Flag $\leftarrow$ True}
 \lElse{$R\leftarrow R \setminus (\rhoget(m,w)\cup \mathtt{Des}(\rhoget(m,w)))$}
 }
 \lIf{Flag}{\Return $T$}
 $i \leftarrow i+1$; $\mathbf{x}_i \leftarrow \mathbf{x}_R$\\
  $M_i\leftarrow \{m\in\mathcal{A}_{\overline{\mathbf{v}}}\cap \mathcal{M}: d(m,\mathbf{x}_{i}) = d^\downarrow_k(\mathbf{x}_{i})\}$
 }
\Return $T$
\caption{NextMen}
\label{NextMen}
\end{algorithm}

\section{Conclusion}

In this paper, we have shown that the minimization of a Generalized Gini Index (GGI) of the dissatisfactions of men and women in a stable marriage problem is an NP-hard problem. Then, we have proposed a polynomial time 2-approximation algorithm for the problem, based on a rounding of the optimal solution to the linear programming relaxation of the problem. Lastly, we have shown that minimizing a GGI of the dissatisfactions of men and women in a stable marriage is in the class XP with respect to the number of strictly positive weights in the GGI operator.

For future works, following \cite{aziz2017random}, it could be worth investigating the randomized version of the GGI stable marriage problem. By \emph{randomized}, we mean that we consider mixed stable marriages, and not only deterministic stable marriages. A mixed stable marriage is a probability distribution over stable marriages. This enlargement of the set of feasible solutions could make it possible to enhance the optimal GGI value (where the GGI operator is applied to the vector of expected dissatisfactions of the agents). Note that the relaxed solution  we compute in the first step of the 2-approximation algorithm proposed in Section~\ref{sec:approximation} can be converted into a mixed stable marriage by using a trick proposed by \cite{teo1998geometry}. It turns out that the obtained approach returns an optimal marriage for the randomized variant of the GGI stable marriage problem. A more thorough investigation of the randomized GGI stable marriage problem is underway.

\bibliographystyle{named}
 \bibliography{algorithmica2018}

\end{document}